\documentclass{article}

\usepackage[total={6in, 8in}]{geometry}

\usepackage{booktabs} 
\usepackage[ruled]{algorithm2e} 

\SetAlFnt{\small}
\SetAlCapFnt{\small}
\SetAlCapNameFnt{\small}
\SetAlCapHSkip{0pt}
\IncMargin{-\parindent}
\usepackage{graphicx}

\usepackage{amssymb,amsthm,amsmath}
\usepackage{mathtools}
\usepackage{booktabs}
\usepackage{paralist}
\usepackage{tikz-cd}
\usepackage{tikz-network}
\usepackage{caption}
\usepackage{subcaption}
\usepackage{mathdots}
\usepackage{csquotes}
\usepackage{todonotes}
\usepackage{xargs}
\usepackage{comment}
\usepackage{standalone}
\usepackage{sgame}
\usepackage{bm}
\usepackage{natbib}
\usepackage{url}

\usepackage{subcaption}

\tikzset{ball/.style={circle, draw, fill=black,inner sep=0pt, minimum width=4pt}}

\usepackage{pgfplots}
\pgfplotsset{compat = newest}

\usepackage{xcolor}
\usetikzlibrary{decorations.markings}

\usepackage{tikz}
\usetikzlibrary{arrows.meta}
\tikzset{label/.style = {inner sep=1pt, fill=white}}
\tikzset{nd/.style={inner sep=1pt}}
\tikzset{>=Latex}
\tikzset{arc/.style = {->, semithick, >=Latex}}

\theoremstyle{plain}
\newtheorem{thm}{Theorem}[section]

\theoremstyle{definition}
\newtheorem{defn}[thm]{Definition}

\newtheorem{lem}[thm]{Lemma}
\newtheorem{corol}[thm]{Corollary}

\newtheorem{prop}[thm]{Proposition}

\newcommand{\real}{\mathbb{R}}

\newcommand{\exptu}{\mathbb{U}}

\newcommand{\arc}[3][]{ #2 \xlongrightarrow{#1} #3}

\DeclareMathOperator{\intr}{int}

\newcommand{\dx}{\mathrm{d}}

\DeclareMathOperator{\rd}{rd}

\DeclareMathOperator{\softmax}{softmax}
\DeclareMathOperator{\BRD}{BRD}
\DeclareMathOperator{\brd}{brd}

\title{Computing stable limit cycles of learning in games}
\date{}

\author{Oliver Biggar and Christos Papadimitriou}

\begin{document}

\maketitle

\begin{abstract}
Many well-studied learning dynamics, such as fictitious play and the replicator, are known to not converge in general $N$-player games. The simplest mode of non-convergence is cyclical or periodic behavior. Such cycles are fundamental objects, and have inspired a number of significant insights in the field, beginning with the pioneering work of Shapley (1964). However a central question remains unanswered: which cycles are stable under game dynamics? In this paper we give a complete and computational answer to this question for the two best-studied dynamics, fictitious play/best-response dynamics and the replicator dynamic. We show (1) that a periodic sequence of profiles is stable under one of these dynamics if and only it is stable under the other, and (2) we provide a polynomial-time spectral stability test to determine whether a given periodic sequence is stable under either dynamic. Finally, we give an entirely `structural' sufficient condition for stability: every cycle that is a sink equilibrium of the preference graph of the game is stable, and moreover it is an attractor of the replicator dynamic. This result generalizes the famous theorems of Shapley (1964) and Jordan (1993), and extends the frontier of recent work relating the preference graph to the replicator attractors.
\end{abstract}

\section{Introduction} \label{sec:intro}

In 1964, Lloyd Shapley proved a now-famous result about learning in games: \emph{fictitious play}---a natural learning algorithm where each player selects a best response to the past play of the other players---does not converge to Nash equilibria in every game \citep{shapley_topics_1964}. This finding upended the widely-held belief that learning was the natural justification for Nash equilibrium play, and was the beginning of a research tradition on the interaction betwen learning and Nash equilibria that now encompasses the field of \emph{game dynamics}. 

Shapley's counterexample is very simple. He presented a symmetric, $3\times 3$ game (Figure~\ref{fig:shapley})---not dissimilar to Rock-Paper-Scissors (RPS)---possessing a unique Nash equilibrium (NE). Unlike RPS, this equilibrium is \emph{unstable} under fictitious play (FP)---nearby trajectories spiral away from it, eventually converging to a 6-cycle, now called the \emph{Shapley polygon} \cite{gaunersdorfer_fictitious_1995,benaim_learning_2009}. 

Over time, a growing body of work found that the convergence to the cycle in Shapley's game is a surprisingly robust phenomenon. First, it generalizes to many other dynamics \cite{gaunersdorfer_fictitious_1995,benaim_perturbations_2012,viossat_no-regret_2013}, notably including the \emph{multiplicative weights update} \citep{arora2012multiplicative} learning algorithm and its continuous-time analog, the \emph{replicator dynamic (RD)} \cite{taylor_evolutionary_1978,smith1973logic}. Unlike FP and its generalized form, the \emph{best-response dynamics} (BRD), RD is smooth and arises from first principles in both learning---as a regularized form of FP---and \emph{evolutionary game theory}
\cite{sandholm2010population}. Second, the phenomenon extends beyond this one game, with various papers in economics and computer science observing cyclic behavior in game dynamics, both theoretically \cite{zeeman_population_1980, krishna_convergence_1998, foster_nonconvergence_1998, hofbauer1998evolutionary,imhof2005evolutionary,kleinberg_beyond_2011,mertikopoulos_cycles_2018,boone_darwin_2019} and empirically \cite{mckane2005predator,cason2010testing,xu2013cycle,cason2014cycles,pangallo_best_2019} (see Section~\ref{sec: related}). The other particularly notable non-convergent game example came from Jordan \cite{jordan_three_1993}, who constructed a $2\times 2\times 2$ game where, like Shapley's, FP converges to a 6-sided cycle instead of the unique Nash equilibrium (see Figure~\ref{fig:jordan}). Later, Kleinberg et al. \cite{kleinberg_beyond_2011} showed that for a parametrized family of utilities in Jordan's game, the RD also converges to a 6-cycle, and further that the payoff outcomes on this cycle can be \emph{much better} than the equilibrium outcome. 

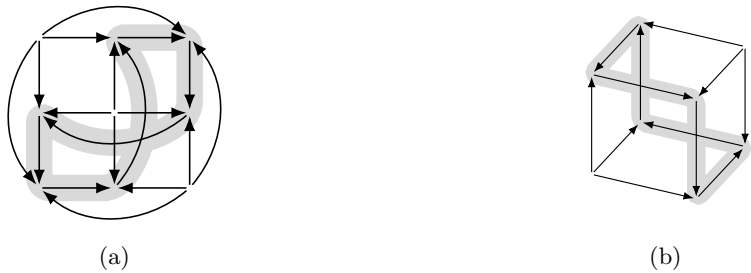
\begin{figure}
    \centering
    \begin{subfigure}{.45\textwidth}
        \centering
        \begin{tikzpicture}

\draw [rounded corners, color = gray!30, line width = 10pt] (2,2) to (2,1) to[out = 220, in = -40] (0,1) to (0,0) to (1,0) to[in = -50, out = 50] (1,2) -- cycle;

    \node[nd] (1) at (0,0) {};
    \node[nd] (2) at (1,0) {};
    \node[nd] (3) at (2,0) {};
    \node[nd] (4) at (0,1) {};
    \node[nd] (5) at (1,1) {};
    \node[nd] (6) at (2,1) {};
    \node[nd] (7) at (0,2) {};
    \node[nd] (8) at (1,2) {};
    \node[nd] (9) at (2,2) {};
    
    \draw[arc] (4) to (1);
    \draw[arc] (3) to[in = -40, out = 220] (1);
    \draw[arc] (3) to[out = 50, in = -50] (9);
    \draw[arc] (8) to (9);
    \draw[arc] (5) to (8);
    \draw[arc] (5) to (4);
    
    \draw[arc] (7) to (4);
    \draw[arc] (7) to[in = 130, out = -130] (1);
    \draw[arc] (7) to (8);
    \draw[arc] (7) to[in = 140, out = 40] (9);
    \draw[arc] (1) to (2);
    \draw[arc] (3) to (2);
    \draw[arc] (5) to (2);
    \draw[arc] (2) to[in = -50, out = 50] (8);
    \draw[arc] (3) to (6);
    \draw[arc] (9) to (6);
    \draw[arc] (5) to (6);
    \draw[arc] (6) to[out = 220, in = -40] (4);

 \end{tikzpicture}
        \caption{}
        \label{fig:shapley}
    \end{subfigure}
    \quad
    \begin{subfigure}{.45\textwidth}
        \centering
        \begin{tikzpicture}[scale=.7]
\begin{axis}[
    axis line style = {opacity=0},
    xmin=-0.1, xmax=1.1,
    ymin=-0.1, ymax=1.1,
    zmin=-0.1, zmax=1.1,
    axis equal image,
    xticklabels = {},
    yticklabels = {},
    zticklabels = {},
    xtick style = {draw=none},
    ytick style = {draw=none},
    ztick style = {draw=none},
]

    \draw [rounded corners, color = gray!30, line width = 10pt] (1,0,1) to (0,0,1) to (0,1,1) to (0,1,0) to (1,1,0) to (1,0,0) -- cycle;

    
    \node[nd] (hhh) at (0,0,0) {};
    \node[nd] (thh) at (1,0,0) {};
    \node[nd] (hht) at (0,0,1) {};
    \node[nd] (tht) at (1,0,1) {};
    
    \node[nd] (hth) at (0,1,0) {};
    \node[nd] (tth) at (1,1,0) {};
    \node[nd] (htt) at (0,1,1) {};
    \node[nd] (ttt) at (1,1,1) {};
    
    
    \draw[arc] (hhh) to (hht);
    \draw[arc] (hhh) to (hth);
    \draw[arc] (hhh) to (thh);
    \draw[arc] (ttt) to (htt);
    \draw[arc] (ttt) to (tth);
    \draw[arc] (ttt) to (tht);
    \draw[arc] (htt) to (hht);
    \draw[arc] (hht) to (tht);
    \draw[arc] (hth) to (htt);
    \draw[arc] (tth) to (hth);
    \draw[arc] (thh) to (tth);
    \draw[arc] (tht) to (thh);
\end{axis}
 \end{tikzpicture}
        \caption{}
        \label{fig:jordan}
    \end{subfigure}
    \caption{The preference graphs \cite{biggar_preference_2025} of the famous non-convergent games of Shapley \cite{shapley_topics_1964} and Jordan \cite{jordan_three_1993}. In these games, fictitious play and the replicator dynamic do not converge to Nash equilibria, and instead converge to the highlighted 6-cycle, which is also the unique \emph{sink equilibrium} \cite{goemans_sink_2005}. We show in this paper that cycles like these, without outgoing arcs, are always stable under best-response dynamics and the replicator dynamic.}
    \label{fig:shapley and jordan}
\end{figure}

These observations---that non-convergence to Nash equilibria, and the presence of cycles, are general game-theoretic phenomena---motivated a shift in thinking. Instead, Papadimitriou and Piliouras \cite{papadimitriou_game_2019} proposed that we should view the `meaning' of the game as the stable outcomes of game dynamics, whether they are Nash equilibria or, more often, some complex pattern.

For many game dynamics (including BRD and the RD), it is easy to see that, if the dynamic converges to a single point from any fully-mixed initial strategy profile, then that point must be a Nash equilibrium \cite{hofbauer1998evolutionary}. The next more complex outcome occurs when the dynamic converges to a 
\emph{cycle}, that is, a periodic path, whose nodes are pure strategy profiles---as in Shapley's and Jordan's games. Such cycles form an intermediate case between Nash convergence and the opposite extreme of chaotic, totally unstructured orbits and so represent a complex, non-static behavior that remains interpretable \cite{sparrow2008fictitious}. They are a critical first waypoint towards understanding more complex limit sets. As such, they represent one of the best-researched classes of non-convergent behavior in game dynamics, whose major studies include \cite{shapley_topics_1964,jordan_three_1993,gaunersdorfer_fictitious_1995,krishna_convergence_1998,foster_nonconvergence_1998,hofbauer_evolutionary_2003,hofbauer_time_2009,benaim_learning_2009,kleinberg_beyond_2011} (see Section~\ref{sec: related}). Despite this progress, however, an understanding of stable cycles in general $N$-player games remains elusive. That is our goal in this paper: \emph{to identify and compute the stable limit cycles of BRD and the RD.}

\subsection*{Our contributions}


We begin by identifying a fundamental computational problem for game dynamics: \emph{Given a game, which cycles of outcomes can occur stably under BRD?  Under RD?} We give an answer to this question through two theorems: first, we show that for BRD (and hence for FP), stable periodic paths correspond to walks in the preference graph which satisfy a  spectral condition. This gives us a polynomial-time algorithm for verifying stability of periodic behavior under BRD, placing the problem of finding stable cycles into NP. Our second major contribution extends this to RD: we show that a cycle is stable under BRD if and only if it is stable under RD, immediately establishing that the same computational test works for RD, while at the same demonstrating the `robustness' of the stability property to the specific dynamic under consideration. This extends a long line of research connecting these two dynamics \cite{gaunersdorfer_fictitious_1995,benaim_perturbations_2012,viossat_no-regret_2013}, and allows us to leverage the `inherently combinatorial' structure of BRD to prove results about RD. 



One of the key properties of Shapley's original non-convergence proof was its combinatorial nature \cite{shapley_topics_1964}. Specifically, he defined his eponymous game by a collection of inequalities among utilities, rather than by a specific numerical instance. In today's terminology, these inequalities exactly define a graph known as the \emph{preference graph} \cite{biggar_graph_2023,biggar_preference_2025}, an object which captures the discrete structure of the utilities in a game. Papadimitriou and Piliouras hypothesized that the sink strongly connected components of the game (the \emph{sink equilibria} \cite{goemans_sink_2005}) capture key properties of the limit sets of the replicator dynamic. In a sequence of follow-up results, Biggar and Shames \cite{biggar_replicator_2023,biggar_attractor_2024} proved that the attractors of the replicator dynamic indeed closely correspond to the structure of the preference graph, with the sink equilibria completely determining the attractors in several classes of games \cite{biggar2025sink}.

The power of `graph-based' results like these---when they exist---is that they are robust to small changes in the utility functions, and provide intuition for how the \emph{structure} of the game affects the dynamic outcome. In this context, Shapley's game can be characterized by a simple cycle property: \emph{it is the smallest two-player preference graph whose sink equilibrium is a cycle of length longer than four}\footnote{Cycles of length 4 are the smallest possible, and form a special case where convergence to Nash equilibria takes place. See Section~\ref{sec: 4cycle} and \cite{krishna_convergence_1998}.} \cite{biggar_preference_2025}. Jordan's game too, when viewed as a graph, can be characterized by a similar property: \emph{it is the smallest $N$-player game whose sink equilibrium is a cycle of length longer than four} \cite{biggar_preference_2025}. Our third major contribution of this paper is a theorem in graph-theoretic line, which generalizes Shapley's and Jordan's findings: \emph{whenever a cycle is also a sink equilibrium, it is stable under BRD, and furthermore it is an attractor of the replicator dynamic}\footnote{Most stable cycles, while stable in the sense of attracting an open set of fully-mixed initial conditions, are not \emph{asymptotically stable} under the replicator, and hence are not attractors. This is because asymptotic stability requires uniform convergence on the boundary as well as the interior of the game. When the cycle is a sink equilibrium, this stronger condition holds.}. This provides the first proof that the stability of the 6-cycle in Jordan's game is completely determined by its graph structure\footnote{This is notable because proofs of convergence in Jordan's game are difficult, and typically rely on assumptions like 0/1-payoffs or symmetry \cite{jordan_three_1993,gaunersdorfer_fictitious_1995,benaim_perturbations_2012}. Kleinberg et al. \cite{kleinberg_beyond_2011} is the most general such convergence result we know of in this genre, and it only suffices to prove convergence for a one-parameter family of Jordan-graph games, and their proof is very long and technical.}. Our result also extends a recent theorem of Biggar and Papadimitriou \cite{biggar2025sink}, who showed that sink equilibria which are \emph{uniform-weighted cycles} (those where the weights on each arc on the cycle are equal) are always attractors for RD.



\subsection{Related work} \label{sec: related}


The fact that dynamics can exhibit cycling behavior originates in Shapley's \cite{shapley_topics_1964} paper, and this seeded a vast array of literature on non-convergence in games. Jordan's $2\times 2\times 2$ \cite{jordan_three_1993} gave an even simpler presentation of the non-convergence of BRD. Gaunersdorfer and Hofbauer \cite{gaunersdorfer1992time,gaunersdorfer_fictitious_1995} extended these ideas to the replicator dynamic, showing that the time-average of the replicator dynamic also converges in Shapley's and Jordan's games, and to the same polygonal orbit, which they dub the \emph{Shapley polygon}. This comparison was the first in a long and mathematically rich analysis of the relationship between BRD and various continuous perturbations of it, like the replicator \cite{benaim_learning_2009,hofbauer_time_2009,benaim_perturbations_2012,viossat_no-regret_2013}. These results in this line prove that no-regret dynamics that are perturbed best-response must converge to set that are invariant and chain transitive under BRD. This can be viewed as partial converse to our results, where we show that a specific stability criterion suffices to show convergence of both the replicator and BRD.

Monderer and Sela \cite{monderer_fictitious_1997} use the example of Shapley's to develop sufficient conditions on when cycling cannot occur under FP, giving a proof that CFP converges to Nash equilibria in $2\times 3$ games. This idea was generalized by Berger \cite{berger_fictitious_2005}, who showed that FP converges to Nash in $2\times n$ games. Interestingly, Berger's results can be interpreted in a `cycle' sense: he shows that almost every FP trajectory in a $2\times n$ game eventually reaches either a PNE or a \emph{4-cycle}, demonstrating that even in this class of games, the last-iterate behavior is often cyclic. The remainder of the statement follows from a special property of 4-cycles, that we discuss in Section~\ref{sec: 4cycle}: if FP follows a stable 4-cycle, the time-average converges to a Nash equilibrium. Four-cycles are special in this regard---Krishna and Sj\"ostrom \cite{krishna_convergence_1998} prove that if CFP follows a periodic sequence of length greater than four in a 2-player game with an interior Nash equilibrium, the time-average does not converge. The mathematical ideas in \cite{krishna_convergence_1998} lay the groundwork for our work, including developing the main ideas behind the Poincar\'e matrix we build in Section~\ref{sec: FP}, and the resultant eigenvalue analysis. Our Theorem~\ref{thm: stability test FP} implies that their result also holds in general $N$-player games. Similarly, the core ideas of the Poincar\'e map and their application to heteroclinic replicator cycles is thoroughly developed in the influential textbook of Hofbauer and Sigmund \cite{hofbauer1998evolutionary}. The particularly focus on `monocyclic games', which are symmetric/one-population games with one `spanning' best-response cycle.

In their work, Gaunersdorfer and Hofbauer \cite{gaunersdorfer_fictitious_1995} reference Gilboa and Matsui \cite{gilboa1991social} in noting that, in Shapley's game, ``the Shapley polygon is a much better candidate for the `solution' of a game than the unstable Nash equilibrium." This observation represents a shift in thinking from viewing cycling as aberrant non-convergent behavior to viewing it as a one of the fundamental possibilities for learning in games. The Shapley polygon inspired Benaim et al. \cite{benaim_learning_2009} to propose a new solution concept for monocyclic games: the TASP (time-average of the Shapley polygon). The TASP is motivated as a step towards a general dynamical solution concept, and the authors show that when using recency-biased CFP the players converge to a single `median point' of the cycle. They argue that the TASP is a better predictor than Nash not only theoretically, but also empirically, an idea explored in several works in experimental economics \cite{cason2010testing,xu2013cycle,cason2014cycles}. In this paper we describe exactly when stable Shapley polygons exist under BRD, and our sufficient condition in Theorem~\ref{thm: sink equilibria are stable} can be viewed as an $N$-player generalization of the monocyclicity condition.

This shift away from Nash-based concepts to dynamical ones in economics literature paralleled a similar shift in the computer science literature. In that context, the idea originated in the study of the Price of Anarchy \cite{roughgarden2005selfish,koutsoupias1999worst}. Goemans et al. \cite{goemans_sink_2005} observed that the algorithmic predictions of the Nash equilibrium-based Price of Anarchy performed poorly in dynamic settings, reflecting both the general non-convergence of dynamics \cite{milionis_impossibility_2023,hart_uncoupled_2003} and the computational difficulty of computing a Nash equilibrium \cite{daskalakis_complexity_2009,chen2009settling}. They suggested an alternative metric grounded in dynamics, called the Price of Sinking, based on a different solution concept called the \emph{sink equilibria}, which are the sink connected components of the best-response graph of the game. In a famous later work, Kleinberg et al. \cite{kleinberg_beyond_2011} showed that, in a parameterized family of $2\times 2\times 2$ games, the Nash equilibrium gave a far worse prediction of the behavior of the replicator dynamic than the sink equilibrium, which in this game was a 6-cycle, and further showed that the utility on the 6-cycle was arbitrarily higher than the Nash. This game turns out to be exactly Jordan's game\footnote{It appears that this was an independent discovery of this game, as the authors do not cite Jordan's paper.}, and this sink equilibrium defines the heteroclinic cycle corresponding to the Shapley polygon.

These ideas led to the proposal of \cite{papadimitriou_game_2019} to view the outcomes of dynamics as the meaning of the game, which we discussed in the introduction, as well as the suggestion that the preference graph and its sink equilibria might form a useful tool for this analysis \cite{biggar_graph_2023}. A recent survey of the preference graph properties is \cite{biggar_preference_2025}. This approach to game dynamics has led to several recent results \cite{biggar_replicator_2023,biggar_attractor_2024,biggar2025sink,hakim2024swim} demonstrating that the structure of the preference graph is indeed closely related to the attractors of the replicator dynamic. Most close to our work is the recent theorem of \cite{biggar2025sink}, who proved that sink equilibria which are uniform-weighted cycles obey a property called \emph{pseudoconvexity}, which guarantees that they are attractors. However, their result does not generalize to all cycles because cycles without uniform weights are not pseudoconvex. Out Theorem~\ref{thm: sink equilibria are stable} completes this picture, by proving that all cycles, regardless of weights, are attractors of the replicator dynamic.

\subsection*{Preliminaries} \label{sec: prelims}

Our paper studies $N$-player normal-form games, with finitely many strategies $S_1,S_2,\dots,S_N$ for each player \citep{myerson1997game}. The payoffs are defined by a utility function $u : \prod_{i=1}^{N}S_i \to \real^N$. 
A \emph{strategy profile} $p$ is an assignment of strategies to all player and we use $p_{-i}$ to denote an assignment of strategies to all players other than $i$. 
A \emph{subgame} is the game resulting from restricting each player to a subset of their strategies.
A \emph{mixed strategy} is a distribution over a player's pure strategies, and a \emph{mixed profile} is an assignment of a mixed strategy to each player. We also refer to strategies as \emph{pure strategies} and profiles as \emph{pure profiles} to distinguish them the mixed cases. We typically write a mixed profile as $x$ and denote $x^i$ for the distribution over player $i$'s strategies, and $x^i_s$ for the $s$-entry of player $i$'s distribution, where $s\in S_i$. The \emph{strategy space} of the game is the set of all mixed profiles $X=\prod_{i=1}^N \Delta_{|S_i|}$ where $\Delta_{|S_i|}$ are the simplices in $\real^{|S_i|}$. 
Any mixed profile $x$ naturally defines a product distribution over profiles, which we denote by 
\begin{equation} \label{product}
    z = x^1 \otimes x^2 \otimes \dots \otimes z^N.
\end{equation} That is, if $p = (s_1,s_2,\dots,s_N)$ is a strategy profile, and $x$ is a mixed strategy, then $z_p = x^1_{s_1}x^2_{s_2}\dots x^N_{s_N}$ is the $p$-entry of the distribution $z$. $z_{p_i}$ is defined similarly.


The definition of the utility function extends naturally to mixed profiles, via expectation. We denote this by $\exptu : X \to \real^N$, defined by
\begin{equation} \label{def: expected utility}
\exptu(x) = \sum_{s_1\in S_1} \sum_{s_2\in S_2} 
    \dots\sum_{s_N\in S_N} \left(\prod_{j = 1}^N x^j_{s_j}\right) u(s_1,s_2,\dots,s_N) = \sum_{p} z_p u(p)
\end{equation}



An important tool in our paper is the \emph{preference graph} of a game \citep{biggar_graph_2023,biggar_preference_2025}. This a directed graph 
whose nodes are the pure profiles of the game, and where there is a directed edge from $p$ to $q$ if (a) $p$ and $q$ differ in the strategy of one player, and (b) this player's utility in $p$ is greater than or equal to that in $q$. The arcs of the preference graph can be canonically given weights, where an arc is weighted by the difference between the two utilities. 
The {\em sink  equilibria} of a game are the sink strongly connected components of the preference graph of the game. For a recent summary of preference graph results, see \cite{biggar_preference_2025}.



\subsubsection*{Dynamical systems}


We study two game dynamics in this paper: the \emph{replicator dynamic} (RD) and the \emph{best-response dynamics} (BRD). First, the \emph{replicator dynamic}~\citep{sandholm2010population,hofbauer_evolutionary_2003} is flow on the strategy space of a game, defined as the solution of the following ordinary differential equation:  
\[
\dot x_s^i = x_s^i\left(\exptu_i(s; x_{-i}) - \sum_{r\in S_i} x_r^i \exptu_i(r; x_{-i}) \right)
,\]
where $x$ is a mixed profile, $i$ a player and $s$ a strategy. The second is the best-response dynamics (BRD) \cite{hofbauer_evolutionary_2003}, defined by 
\[
\dot x(t) \in BR(x(t)) - x(t)\ .
\]
Here, $BR(x)$ is the best-response function---given a mixed strategy profile $x$, it $BR_i(x)$ returns the set of best-responses to $x$ for player $i$, equivalently defined as $\arg\max u_i(x)$. We will somewhat abuse notation by writing $BR(x)$ to mean the elementwise Cartesian product $\prod_i BR(x^i)$.

In much the same way that the RD is the continuous-time generalization of the multiplicative weights algorithm \cite{arora2012multiplicative}, BRD is a continuous generalization of classical fictitious play algorithm from its original discrete-time setting. An alternative generalization is \emph{Continuous Fictitious Play} (CFP) \citep{viossat_no-regret_2013}, which is very similar to BRD except differs by a rescaling of velocity, giving: 
\[
\dot x(t) \in \frac{1}{t}(BR(x(t)) - x(t))
\]
This rescaling does not affect the limits, and we will work with BRD for its relatively simpler presentation. Note that unlike the replicator dynamics, this is a \emph{differential inclusion} \cite{aubin2008differential} rather than a differential equation. This is because at some points the best-response correspondence is multi-valued. However, at least one trajectory exists from every starting point, and indeed from almost all points exactly one trajectory exists \cite{gaunersdorfer_fictitious_1995,hofbauer_evolutionary_2003,viossat_no-regret_2013}. In fact, from almost all starting points the best-response function only becomes multi-valued for intervals of zero time (which do not affect the trajectory). We can focus our attention on these generic BRD paths, because our goal in this paper is to characterize periodic behavior that is \emph{persistent} (Definition~\ref{def: periodic play}), in the sense that it is followed from an open set of initial conditions, as in \cite{krishna_convergence_1998}.

All of these dynamics (and more generally, all of the class of Follow-The-Regularized-Leader (FTRL) dynamics \cite{vlatakis-gkaragkounis_no-regret_2020}), can be equivalently expressed in the space of `accumulated payoffs', which we here call simply \emph{payoff space}. This is a well-known transformation \cite{hofbauer_time_2009}, which leads to many useful insights, including the fact of \emph{volume conservation} \cite{akin_evolutionary_1984,hofbauer1998evolutionary,swenson_best-response_2018}, which we will make significant use of in this paper. In contrast to \emph{strategy space}, where the state is expressed in terms of mixed strategy profiles, payoff space expressed the state in terms of the total payoff accrued by each strategy for each player. Throughout this paper, we typically denote elements of strategy space by $x$ and payoff space by $w$, and we wil mostly work in payoff space. To see why the transformation to payoff space can be useful, observe that in payoff space RD can be equivalently expressed as the \emph{exponential weight} dynamic:
\begin{defn}[RD in payoff space] \label{def: rd payoff}
\[
    \dot w^i_s = \exptu^i_s(\softmax_{j\neq i} w)
\]
\end{defn}
where $\softmax_{j\neq i} w$ is applied elementwise on each player $j\neq i$, giving a mixed profile $x_{-i}$. One can recover the replicator equation by defining $x^i := \softmax w^i$ and taking the time derivative of $x^i$. BRD can be defined very similarly, in a way which reflects the similarities between these dynamics:
\begin{defn}[BRD in payoff space] \label{def: brd payoff}
    \[
\dot w^i_s \in \exptu^i_s(\arg\max_{j\neq i} (w))
\]
\end{defn}

For BRD, the dynamics from a generic starting point in payoff space consist of piecewise linear paths, where the best-response is a pure profile. This naturally gives us a sequence of pure profiles that we call the \emph{sequence of play}. The following definition formalizes this in a way such that this discretization can be applied to RD as well.

\begin{defn}[Switches and sequences of play]
    Let $w$ be a point in payoff space, and let $\phi$ be our dynamic. Given any point $\phi(w,t)$, the maximum $\arg\max \phi(w,t)$ defines a subgame of the game, and this sequence of subgames is a locally constant map in $t$. That is we get a sequence of times $t_i$ such that during each segment $t\in (t_i,t_{i+1})$, $\arg\max \phi(w,t) =: S_i$ is constant. We call the sequence of subgames $S_i$ (played for positive time intervals) the \emph{sequence of play}, and we call the times $t_i$ the \emph{switching times}.
\end{defn}



\subsection{Periodic and eventually periodic behavior}

Our goal in this paper is to characterize the possible {\em stable} periodic behaviors that can occur under game dynamics; by {\em stable behaviors} we mean behavior that is unchanged by small perturbations of the input. Cyclic or periodic behavior intuitively means that the same sequence of outcomes repeats indefinitely. In this section we give a precise formalization of this intuition.

For motivation, recall that under BRD, the gradient is locally constant, and so the trajectories take the form of piecewise linear paths. In each segment of this trajectory, all players play a best-response against the empirical distribution of their opponents, which defines the BRD sequence of play.
Hence we can naturally define \emph{periodic play} as a trajectory where the sequence of play is periodic, in that it consists of a repeated sequence of $K$ profiles for some $K$. However, we want also that this periodic play can occur robustly, in the sense that it is immune to small perturbations in the initial conditions.

\begin{defn}[Stable periodic play] \label{def: periodic play}
    Let $\phi$ be our dynamic (RD or BRD) and let $w$ be a point in payoff space. Given a periodic sequence of profiles $c$, we say the play from $w$ is \emph{periodic} if the sequence of play is periodic. We say the play sequence is \emph{persistent} if there is an open set $U$ of points in payoff space such that the sequence of play follows that sequence. We say that the play is \emph{stable} if it is both persistent and \emph{isolated}, in that for all $w\in U$ their normalized trajectories $w/\|w\|$ has the same limit set.
\end{defn}

The difference between persistent and stable is the additional criterion that the behavior be `isolated'. This technical-sounding requirement exists to distinguish the case when periodic paths are \emph{Lyapunov stable} but not \emph{asymptotically stable}, essentially because they lie on a manifold containing infinitely many different periodic paths. Being isolated guarantees that the limit behavior is unique, paths that are both persistent and isolated are robust to small perturbations of their input making them a generic source of periodic behavior in game dynamics. The special case of 4-cycles, which we discuss in Section~\ref{sec: 4cycle} later, demonstrates the difference between persistence and stability.

There is another important point to note here. If $w$ lies on a periodic orbit (after some finite time $T$, the dynamic maps $w$ back to $w$), then the sequence of play it generates is also periodic. However, 
for periodic play we do not require that the same amount of \emph{time} is spent playing each profile. Shapley's game provides the motivating example \cite{shapley_topics_1964}: while the CFP path follows the 6-cycle forever, the time spent at each profile increases geometrically with each lap of the cycle\footnote{It is this exponential increase that ensures that the trajectory does not converge to the Nash equilibrium.}. 
It is also for this reason that we talk about the limit set for the normalized sequence $w/\|w\|$---the unnormalized sequence may not have any limit points, because it goes to infinity.

This distinction is also important for defining periodic behavior for the replicator dynamic. In Shapley's game, the 6-cycle is not periodic under the replicator; in the neighborhood of the cycle, we converge to a {\em heteroclinic} 6-cycle---meaning that the dynamic moves around the cycle, stopping and restarting at the nodes. Intuitively, this play is periodic because the sequence of observed outcomes follows a periodic path between the six profiles on this heteroclinic cycle. This fits the definition of periodic play for the replicator we have here, in terms of the sequence of observed outcomes.

\section{Stable periodic play under best-response dynamics} \label{sec: FP}

In this section we characterize the possible stable periodic sequences under BRD. The key idea is to obtain a combinatorial representation of periodic paths under BRD, by identifying periodic walks in the preference graph of the game.

\subsection{Best-response dynamics and the preference graph} 

The piecewise-linear nature of generic BRD paths is closely connected to the structure of the preference graph, as the example of Shapley's game (Figure~\ref{fig:shapley}) demonstrates. The intuition for this connection is simple: each BRD switch between two pure profiles involves a unilateral deviation of a player to a better-response strategy, given the current empirical distribution of their opponent. These deviation define the arcs in the preference graph, so we would expect that generic switches must occur at arcs. This is indeed the case, as the following lemma shows.

\begin{lem} 
    If there is a BRD switch from a profile $p_1$ to $p_2$, then there is an arc in the preference graph from $p_1$ to $p_2$.
\end{lem}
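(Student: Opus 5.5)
Work in payoff space (Definition~\ref{def: brd payoff}) along a generic BRD trajectory. By assumption, profile $p_1$ is played on an interval $(t_{k-1},t_k)$ and $p_2$ on $(t_k,t_{k+1})$; that is, $p_1$ and then $p_2$ are the unique $\arg\max$ of $w(t)$ on these intervals. The proof has two steps: first show that $p_1$ and $p_2$ differ in exactly one player's strategy, then show that this player weakly prefers $p_2$ against the other players' common strategies. Together these give the arc $p_1\to p_2$.

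\emph{Step 1: exactly one player switches.} The argmax map is locally constant off the switching hypersurfaces $\{w^i_s = w^i_r\}$. At the switching time $t_k$, the players whose argmax changes are exactly those $i$ for whom two payoff coordinates $w^i_s,w^i_r$ tie at their maximum. For a generic initial condition, two players' ties cannot occur at the same instant. The reason is that each $w^i$ moves along a line with constant velocity $\exptu^i(p_{-i})$ between switches, so simultaneous ties form a codimension-$\geq 2$ condition. Such a condition is avoided on an open dense (full-measure) set of initial points, and we may restrict to generic paths as the paper does after the BRD definition. Hence $p_1$ and $p_2$ differ only in one player $i$'s strategy, say $p_1^i=s$ and $p_2^i=r$, with $p_{1,-i}=p_{2,-i}=:q$.

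\emph{Step 2: the switching player improves.} Just before $t_k$ we have $w^i_s(t) > w^i_r(t)$, at $t_k$ the two are equal, and just after $t_k$ we have $w^i_r>w^i_s$. On $(t_{k-1},t_k)$ the derivative of $w^i_r-w^i_s$ equals $u_i(r,q)-u_i(s,q)$, which is constant. For this difference to rise to zero, the derivative must be positive, or nonnegative in the degenerate case. So $u_i(r,q)\ge u_i(s,q)$: in terms of the payoff player $i$ receives at the profiles themselves, $i$'s utility at $p_2$ is at least its utility at $p_1$. This is precisely the condition for the unilateral deviation between $p_1$ and $p_2$ to be an arc of the preference graph oriented from $p_1$ to $p_2$. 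Here I would check the arc-orientation convention carefully against the stated definition, since the paper phrases it with the inequality in the direction of the source.

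\emph{Main obstacle.} The hard step is Step 1: rigorously justifying that simultaneous switches by two or more players can be ignored. I would argue this through the piecewise-affine structure. There are only countably many finite switching patterns, and for each fixed prefix of the sequence the set of initial conditions yielding a simultaneous tie lies in a finite union of affine subspaces of codimension at least $2$. This set therefore has measure zero and empty interior, so it is irrelevant for persistent sequences of play (Definition~\ref{def: periodic play}). Alternatively, simply define a ``switch'' as the generic event in which a single player's argmax changes.
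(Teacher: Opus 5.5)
Your proposal is correct and is essentially the elementary argument the paper alludes to (it gives only an intuition and defers the proof to the cited literature): generically exactly one player's payoff coordinates tie at a switching time, and since $w^i_r-w^i_s$ is affine on the preceding interval with slope $u_i(r,q)-u_i(s,q)$ and rises from negative values to $0$, that slope is in fact strictly positive, so your ``degenerate'' zero-slope case cannot occur. Your orientation worry resolves in your favour, because the inequality in the Preliminaries' definition is a reversed typo: the arc-matrix definition takes $W_{p,q}=u^i_{s_2}(p)-u^i_{s_1}(p)>0$, and sink equilibria are where play ends, so arcs point toward the deviator's preferred profile, which is what you derive. One small slip: the set of initial conditions producing a simultaneous tie has codimension $1$, not $2$, since trajectories sweeping through a codimension-$2$ tie set fill out a codimension-$1$ set; measure zero and empty interior still follow.
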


The proof is elementary, and forms exist (with slightly different notation) in \cite{krishna_convergence_1998} and \cite{berger_two_2007}. Hence, a periodic sequence of play consists of a repeated sequence of BRD switches, which by this lemma must have arcs between them in the preference graph. The following is then immediate:

\begin{corol}
    Every periodic sequence of play under BRD follows a periodic walk in the preference graph.
\end{corol}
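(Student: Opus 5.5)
The corollary follows by applying the preceding Lemma to each consecutive pair in the sequence of play and then checking that the resulting arcs close up into a walk that repeats.

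First I fix a generic periodic BRD trajectory from $w$. By the discussion of generic paths, the best response is single-valued except at isolated switching times $t_1<t_2<\cdots$. So on each interval $(t_k,t_{k+1})$ the argmax subgame $S_k$ is a single pure profile $p_k$. Since the sequence of play is periodic, there is a $K$ with $p_{k+K}=p_k$ for all $k$.

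Next I check that consecutive profiles differ. The play sequence records a new subgame only when $\arg\max$ changes, so $p_k\neq p_{k+1}$, and each transition $p_k\to p_{k+1}$ is a genuine BRD switch. The preceding Lemma then gives an arc $p_k\to p_{k+1}$ in the preference graph for every $k$.

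Finally, I concatenate these arcs. The sequence $p_1\to p_2\to\cdots\to p_K\to p_{K+1}=p_1$ is a closed walk in the preference graph. The whole sequence of play is this closed walk traversed repeatedly, which is a periodic walk.

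The only delicate point is the genericity assumption. At a switching time, several strategies could tie simultaneously. In that case the next segment's profile might differ from the previous one in more than one player's strategy, and the Lemma's arc, which joins profiles differing in a single player, would not directly apply. This is handled by restricting to generic starting points, which the paper already does. For such points, simultaneous ties among distinct players occur only on a measure-zero set, and multi-valued best responses last for zero time. If non-generic sequences must be included, I would read the Lemma as applying to each unilateral change and insert intermediate profiles, but I expect the generic reading to be the intended one.
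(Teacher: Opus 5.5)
Your proposal is correct and follows the paper's argument: the paper likewise treats a periodic sequence of play as a repeated sequence of BRD switches, applies the preceding Lemma to each switch to get an arc, and reads off a periodic walk. Restricting to generic trajectories, to exclude simultaneous multi-player switches, matches the paper's standing convention of working only with generic BRD paths, so your caveat is appropriate rather than a gap. You should not rely on the fallback of inserting intermediate profiles, though, because the intermediate unilateral deviations need not be arcs.
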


Note that we allow profiles to be used multiple times within a period of a periodic sequence of play. This means that the sequence may only be a \emph{walk} and not a \emph{cycle},  
and the length of the repeated sequence, while finite, can be potentially unbounded in the size of the game. Our method for testing periodic sequences for stability will scale with the length of the sequence we are testing. Now this walk
is the key idea behind our results, because it connects periodic play with a combinatorial object. In the next section, we show that this object is unique---if a periodic walk $c$ is generated by a stable periodic BRD sequence, then that sequence is unique for $c$.  This allows us to find stable periodic play by checking walks in the preference graph for stability.



\subsection{The Poincar\'e matrix of a periodic walk}


In dynamical systems, \emph{Poincar\'e maps} provide a powerful method for determining the stability of a periodic orbit. A Poincar\'e map is defined by selecting a hyperplane through the path of a given orbit, and computing the image of a given point on the hyperplane near the orbit at the time when it next intersects that hyperplane. This constructs a discrete-time dynamical system on this hyperplane where the periodic orbit in question becomes a fixed point, and evaluating its stability can be done with standard linearization techniques.

In this section we show how the Poincar\'e map of BRD in the neighborhood of a given periodic walk in the preference graph can be computed explicitly, and what's more it has a very well-behaved structure---it is a linear map! We can therefore compute its eigenvectors and hence determine the stability of the associated walk in the graph. Specifically, given a walk $c$, we shall construct a matrix $M_c$ which defines the Poincar\'e map along $c$. 

The idea of this matrix is an old one, with precursor ideas in \cite{shapley_topics_1964,krishna_convergence_1998,hofbauer1998evolutionary,gaunersdorfer1992time,gaunersdorfer_fictitious_1995}. In their notable work, \cite{krishna_convergence_1998} derive a form of $M_c$ that is essentially equivalent to ours, albeit with some key differences. The dimensions of their matrix scales with the (potentially unbounded) length of the walk, while ours is bounded by the number of strategies. We also express the structure of $M_c$ as a product of much smaller matrices, defined by the component preference graph arcs in the walk, which allows us to immediately derive some key properties. These component matrices are called the \emph{arc matrices}.


\begin{defn}[Arc matrix] \label{def: arc matrix}
    Let $a = \arc{p}{q}$ be an arc in the preference graph, where $p$ and $q$ differ in the strategy of player $i$, and $s_1$ and $s_2$ are the strategies that player $i$ switches between. Then the \emph{arc matrix} of $a$, written $M_a$, is the $\sum_i^N|S_i|\times \sum_i^N|S_i|$ matrix
    \[
    M_a = I + u(p)c^T_a
    \]
    where $c_a = (e^i_{s_1} - e^i_{s_2})/(u^i_{s_2}(p) - u^i_{s_1}(p))$.
\end{defn}
Note that the denominator $u^i_{s_2}(p) - u^i_{s_1}(p)$ is exactly the weight $W_{p,q}$ on the arc $a$ in the preference graph, so equivalently $c_a = (e^i_{s_1} - e^i_{s_2})/W_{p,q}$.

The meaning of the arc matrix is straightforward to understand. Suppose we are given a point $w$ in payoff space, where we assume that $\arg\max w = p$, and the first BRD switch on the BRD path from $w$ switches to $q$. Then the arc matrix computes the value of $w$ at the time $\tau$ when this BRD switch occurs.
The following lemma formalizes this.
\begin{lem} \label{arc matrix property}
    Suppose that $w$ is a point in payoff space with $\arg\max w = p$, where $p$ is a pure profile. Suppose that after some time $\tau$, BRD switches from $p$ to $q$. Then the value $w(T)$ of $w$ at time $T$ is exactly
    \[
    w(T) = M_a w
    \]
\end{lem}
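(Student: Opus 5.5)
Since the best response is constant on the first segment, the BRD path in payoff space is a straight line there, and the plan is to compute that line and the moment it leaves the region where $p$ is the best response. Throughout, $\tau$ denotes the switching time (written $T$ in the statement), and $u(p)$ is the vector in $\real^{\sum_i |S_i|}$ whose $(j,s)$-entry is $u^j_s(p) := u^j(s; p_{-j})$, the payoff to player $j$ of strategy $s$ against $p_{-j}$. With this convention, Definition~\ref{def: brd payoff} says that on $[0,\tau)$, where $\arg\max w(t) = p$ is pure, we have $\dot w = u(p)$. Hence
\[
w(t) = w + t\,u(p), \qquad 0 \le t \le \tau .
\]

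Next I identify $\tau$. The switch to $q$ means that player $i$'s maximizing coordinate changes from $s_1$ (his strategy in $p$) to $s_2$, while every other player's argmax stays the same. So $\tau$ is the first time at which $w^i_{s_2}(t) = w^i_{s_1}(t)$, which gives
\[
w^i_{s_1} + \tau\, u^i_{s_1}(p) = w^i_{s_2} + \tau\, u^i_{s_2}(p),
\qquad\text{so}\qquad
\tau = \frac{w^i_{s_1} - w^i_{s_2}}{u^i_{s_2}(p) - u^i_{s_1}(p)} .
\]
The numerator is positive because $s_1$ is the strict argmax at the start. The denominator is the arc weight $W_{p,q}$, and it is positive (nonzero) because otherwise the gap between $s_1$ and $s_2$ would never close. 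This is consistent with the earlier lemma that BRD switches follow arcs of the preference graph. Note also that the numerator equals $(e^i_{s_1}-e^i_{s_2})^T w$, so $\tau = c_a^T w$.

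Substituting gives
\[
w(\tau) = w + u(p)\,(c_a^T w) = (I + u(p)c_a^T)\,w = M_a w .
\]
This also shows that the map is linear in $w$, which is the point of the lemma.

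The step needing the most care is the convention that $u(p)$ collects the payoffs against $p_{-j}$ of \emph{all} strategies of each player, together with the genericity assumption that $\arg\max w(t)$ is a single pure profile on $[0,\tau)$. Under these two conditions the velocity is well defined and constant on the segment, and the computation above goes through.
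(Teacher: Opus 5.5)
Your proof is correct and follows the paper's argument: on the first segment the payoff-space velocity is the constant $u(p)$; the indifference equation $w^i_{s_1} + \tau u^i_{s_1}(p) = w^i_{s_2} + \tau u^i_{s_2}(p)$ gives $\tau = c_a^T w$; and substituting into $w(\tau) = w + \tau\, u(p)$ yields $M_a w$. Your remarks on the signs of the numerator and denominator, and on the convention for $u(p)$, usefully clarify the paper's somewhat muddled notation ($\tau$ versus $T$, superscript $1$ versus $i$).
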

\begin{proof}
    Let $i$ be the switching player at time $\tau$, and let $s_1$ and $s_2$ be the strategies they switch between. Observe that the switching time $T$ is defined by the point where when $w^i_{s_1} + u^i_{s_1}(p) \tau = w^i_{s_2} + u^i_{s_2}(p) T=\tau$, that is $\tau = (w^1_{s_1} - w^1_{s_2})/(u^1_{s_2}(p) - u^1_{s_1}(p))$ which is exactly $c_a^T w$. Before the switch occurs, each player $i$ accrues payoff at the rate given by $u^i(p)$, and this occurs for time $\tau$. Hence $w(\tau) = w + u(p) \tau = w + u(p)c_a^T w = (I + u(p)c_a^T) w = M_a w$.
\end{proof}

This lemma provides the foundation to build the Poincar\'e matrix. Simply put, given a finite periodic sequence of BRD switches, we obtain an expression for the BRD path as a finite product of the arc matrices along the associated walk in the preference graph.

\begin{defn}[Poincar\'e matrix]

    Let $c = p_1,p_2,\dots,p_K$ be a periodic walk of length $K$ in the preference graph. The \emph{Poincar\'e matrix} $M_c$ of $c$ is defined as the composition of arc matrices
    \[
    M_c := M_{a_K} M_{a_{K-1}}\dots M_{a_1}
    \]
    where $a_1,a_2,\dots,a_K$ are the arcs between the profiles in $c$.
\end{defn}

Following the intuition of Lemma~\ref{arc matrix property}, this fairly simple construction is enough to give a representation of the Poincar\'e map of BRD from some hyperplane, with respect to the walk $c$.

\begin{corol} \label{poincare matrix property}
Let $c = p_1,p_2,\dots,p_K$ be a periodic walk, and $M_c$ its Poincar\'e matrix. Let $w$ be a point on the set $H = \arg\max^{-1}p_1 \cap \arg\max^{-1}p_K$ (where $\arg\max^{-1}p := \{w : p\in \arg\max w\}$). Suppose that from $w$ the play sequence of BRD follows the walk $c$ until it again intersects $H$, at a point $w'$. Then
    \[
    w' = M_c w
    \]
\end{corol}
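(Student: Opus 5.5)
The argument is an induction along the walk, applying Lemma~\ref{arc matrix property} once per arc and composing the resulting linear maps. Set $w_0 := w$, and for $k=1,\dots,K$ let $w_k$ be the point of the BRD trajectory from $w$ at the $k$-th switching time. By hypothesis the play from $w$ follows $c$ until it first returns to $H$, so these switches are $p_1\to p_2,\ p_2\to p_3,\ \dots,\ p_{K-1}\to p_K,\ p_K\to p_1$, in that order. These are exactly the arcs $a_1,\dots,a_K$ of the walk, with indices read cyclically. The goal is to show that $w_k = M_{a_k} w_{k-1}$ for each $k$, and then to identify $w_K$ with $w'$.

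Step one handles the base point. Since $w\in H$, both $p_1$ and $p_K$ lie in $\arg\max w$, so $w$ sits exactly on the switching boundary of the arc $p_K\to p_1$. By the genericity convention (multi-valuedness only on zero-time intervals), play immediately proceeds at $p_1$. The hypothesis that play follows $c$ from $w$ means the next profile played for positive time is $p_1$, and that it is left via a switch to $p_2$.

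Step two is the inductive step. At $w_{k-1}$ the profile being played for positive time is $p_k$, and on the open interval before the next switch $\arg\max$ equals $p_k$. Lemma~\ref{arc matrix property} then gives $w_k = M_{a_k}w_{k-1}$. Its proof only uses that payoffs accrue at the constant rate $u(p_k)$ until the switching player's two coordinates become equal, so it applies verbatim when $w_{k-1}$ lies on the boundary where $p_{k-1}$ and $p_k$ tie. The one check needed is that the switching time is still $c_{a_k}^T w_{k-1}$. This holds because the tie at $w_{k-1}$ involves a different pair of coordinates, or, when the same player switches twice in a row, the same player's other pair of strategies, so the linear formula for the new crossing is unaffected. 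Composing the $K$ steps gives $w_K = M_{a_K}\cdots M_{a_1} w = M_c w$.

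Step three identifies $w_K$ with $w'$. The last arc $a_K$ is $p_K\to p_1$, so at $w_K$ both $p_K$ and $p_1$ are in $\arg\max$, that is, $w_K\in H$. For $0<k<K$, the points $w_k$ lie in $\arg\max^{-1}p_k\cap\arg\max^{-1}p_{k+1}$, which is not $H$ unless the walk revisits the pair. Under the hypothesis, "again intersects $H$" means completing the walk $c$, so $w_K=w'$. I expect the main obstacle to be this bookkeeping when profiles repeat in the walk: one must interpret "next intersects $H$" as "after one full period of $c$" and make sure the index conventions (arcs $a_k : p_k\to p_{k+1}$ cyclically, ending with $p_K\to p_1$) are consistent with the order of the matrix product.
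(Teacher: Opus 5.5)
Your proposal is correct and follows the paper's proof, which is simply "iteration of Lemma~\ref{arc matrix property}": you apply the arc-matrix lemma once per arc $a_1,\dots,a_K$ and compose the matrices in the order $M_{a_K}\cdots M_{a_1}$. Your extra remarks only make explicit what the paper's one-line proof leaves implicit: first, the lemma's argument still applies at the tie points $w_{k-1}$, where $\arg\max$ contains two profiles; second, ``again intersects $H$'' must be read as one full period when the walk traverses $p_K\to p_1$ more than once.
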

\begin{proof}
    Follows from iteration of Lemma~\ref{arc matrix property}.
\end{proof}

Our interest in the Poincar\'e matrix $M_c$ is due to its ability to determine whether the periodic walk $c$ is stable under BRD. Our main theorem of this section is that the stability can be characterized by analyzing the eigenvalues and eigenvectors of $M_c$.

\begin{thm} \label{thm: stability test FP}
    Let $c$ be a periodic walk in the preference graph, with $M_c$ its Poincar\'e matrix. If $c$ is stable under BRD, then there exists a dominant real eigenvector $\lambda$ of dimension 1, with $\lambda > 1$, with a $\lambda$-eigenvector $\hat w$ where for each $i$ define $\hat w_i := M_{a_i}M_{a_{i-1}}\dots M_{a_1} \hat w$ and two properties hold: (1) $\arg\max \hat w_i = p_i$ and (2) $\hat T_i := c^T_{a_i}\hat w_i \geq 0$. Conversely, if $\lambda$ is a dominant eigenvalue of $M_c$ with eigenvector $\hat w$ where properties (1) holds, and (2) holds strictly\footnote{If (2) held only non-strictly, then the eigenvector lies on the boundary of the cone of points which follow $c$ forever. While nearby points converge to the eigenspace, they need not stay in this cone while doing so, so it is possible that no points strictly follow the path $c$ forever.}, then $c$ is stable under BRD and $\lambda > 1$.
\end{thm}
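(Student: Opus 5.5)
The plan is to reduce everything to the linear Poincar\'e map of Corollary~\ref{poincare matrix property}, and then use a Perron--Frobenius-type cone argument.

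Let $C\subseteq H$ be the set of points $w$ whose BRD play follows $c$ for one full period. The constraints are that for each $i$ we have $\arg\max \hat w_i = p_i$ and the switching time $c_{a_i}^T\hat w_i\ge 0$, where the $\hat w_i$ are obtained from $w$ by the partial products of arc matrices. Each of these is a linear inequality in $w$, because the partial products are linear. So $C$ is a polyhedral cone, and by Corollary~\ref{poincare matrix property} the Poincar\'e map on $C$ is the linear map $M_c$. The set of points following $c$ forever is
\[
C_\infty=\bigcap_{k\ge 0} M_c^{-k}C ,
\]
which is also a convex cone and is invariant under $M_c$.

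\textbf{Forward direction.} Assume $c$ is stable. Persistence gives an open set $U$ whose orbits follow $c$ forever, so $C_\infty$ has nonempty interior and $M_c$ maps this closed convex cone into itself. By the Krein--Rutman / Perron--Frobenius theorem for cone-preserving maps, the spectral radius $\rho$ of $M_c$ is an eigenvalue with an eigenvector $\hat w\in C_\infty$. Since $\hat w$ lies in the closed cone, properties (1) and (2) hold, with (2) non-strict.

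Isolation says that the normalized trajectories $M_c^k w/\|M_c^k w\|$ have a common limit for all $w\in U$. I will use this to show that $\rho$ is simple and strictly dominant in modulus. If there were a second Jordan block, or another eigenvalue of the same modulus (a complex pair or $-\rho$), then generic $w\in U$ would produce different or rotating limit directions, which contradicts isolation.

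Next I show $\lambda=\rho>1$. Each arc matrix has the form $I+uc^T$, so $\det M_a = 1+c_a^T u(p)$. For the switching player this equals $1+(u^i_{s_1}-u^i_{s_2})/W = 0$, which is a problem, so instead I restrict to the quotient by the direction of constant shifts. A cleaner route is to argue directly that $\rho>1$: along the eigenvector the periods are $\hat T_i\ge 0$ and not all zero. The total time satisfies $\sum_i\hat T_i(\lambda^k)$ growing, and payoffs accumulate as $\hat w_{k+1}=\hat w_k+\sum_i u(p_i)\hat T_i$, which is nonzero. So $\lambda\hat w-\hat w\neq 0$, giving $\lambda\neq 1$. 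Also $\lambda<1$ would make $\|M_c^k w\|$ shrink, which is impossible because time spent on each pass is proportional to $\|w\|$ and the payoffs $w$ drift outward. Alternatively, volume conservation in payoff space (cited in the preliminaries) forces the product of the eigenvalues, on the appropriate invariant subspace, to have modulus $1$. A dominant eigenvalue that is simple and strictly largest therefore must exceed $1$ unless all eigenvalues have modulus $1$, and that case contradicts simplicity and dominance.

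\textbf{Converse.} Suppose $\hat w$ satisfies (1), and (2) strictly. Then, after also checking that the argmax conditions are strict, which I expect to follow generically from the strict switching times, $\hat w$ lies in the interior of $C$. Take a small open cone $V$ around $\hat w$. Decompose $w=\alpha\hat w+r$ with $r$ in the complementary $M_c$-invariant subspace, whose spectral radius is less than $\lambda$. Then
\[
\frac{M_c^k w}{\lambda^k\alpha}\to \hat w ,
\]
geometrically and uniformly on a small enough $V$. Shrinking $V$ so that every iterate stays in the interior of $C$ shows that all of these points follow $c$ forever (persistence), with common normalized limit $\hat w/\|\hat w\|$, which gives isolation. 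Finally $\lambda>1$ follows by the same argument as in the forward direction.

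The step I expect to be the main obstacle is making isolation rigorous enough to force a simple, strictly dominant real eigenvalue, and pinning down $\lambda>1$ via volume conservation or the drift argument. I would try volume conservation first.
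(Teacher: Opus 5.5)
Your outline matches the paper's proof. $C_\infty$ is the paper's cone $K$. The generalized Perron--Frobenius theorem gives $\rho(M_c)$ with an eigenvector in the cone. Isolation forces a one-dimensional dominant eigenspace. The converse combines interiority of $\hat w$ with the spectral gap. Reading (1)--(2) off the closedness of $C_\infty$ is a slightly cleaner substitute for the paper's limit-of-positive-times argument.

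\textbf{The gap is the step $\lambda>1$.} The ``drift'' route fails for two reasons. First, strictly positive times do not force $\sum_i \hat T_i u(p_i)\neq 0$: the paper's 4-cycle computation gives $M_c w_0=w_0$ with $T_1,\dots,T_4>0$ in a $2\times 2$ game. Second, the argument never uses isolation, which is exactly what excludes such persistent-but-unstable walks. In addition, ``$\sum_i\hat T_i\lambda^k$ growing'' already presupposes $\lambda>1$.

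Your repair of the determinant is also misdirected. Since $c_a^Tu(p)=-1$, the kernel of $M_a$ is the flow direction $u(p)$, and $M_a$ is a projection onto $\{c_a^Tw=0\}$. Quotienting by constant shifts therefore still leaves determinant $0$. The invariant subspace you want is the section hyperplane $\Pi=\{c_{a_K}^Tw=0\}\supseteq H$. The switching player's payoff vector is the same at both ends of an arc, so $c_{a_{j-1}}^Tu(p_j)=c_{a_j}^Tu(p_j)=-1$ (indices mod $K$). Hence $M_{a_j}$ maps $\{c_{a_{j-1}}^Tw=0\}$ bijectively onto $\{c_{a_j}^Tw=0\}$ and scales $(n-1)$-volume by $\|c_{a_j}\|/\|c_{a_{j-1}}\|$. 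These factors telescope, giving $|\det(M_c|_{\Pi})|=1$. Your determinant argument, which is the precise form of the paper's volume-conservation step, then goes through.

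\textbf{A simpler route to $\lambda>1$.} Let $\mathbf 1^j$ be the all-ones vector on player $j$'s coordinates. For every arc $a$ and every player $j$, $c_a^T\mathbf 1^j=0$, so $M_c\mathbf 1^j=\mathbf 1^j$. Thus $1$ is always an eigenvalue of $M_c$ with geometric multiplicity at least $N\ge 2$, and $\rho(M_c)\ge 1$. Any one-dimensional, strictly dominant $\lambda$ is then automatically $>1$. This also settles $\lambda>1$ in your converse immediately. Your decomposition $w=\alpha\hat w+r$ already needs $\lambda$ algebraically simple, with the rest of the spectrum (including the eigenvalue $1$ of the $\mathbf 1^j$) strictly inside radius $\lambda$.

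\textbf{Two further points.} The same vectors show that $C_\infty$ contains the lines $w+t\mathbf 1^j$, so it is not pointed. That, not the determinant, is where the quotient by constant shifts is actually needed: Krein--Rutman must be applied to a proper cone, a point the paper also glosses. Finally, strictly positive switching times do not rule out a third strategy tying exactly at a switching instant. In the converse you must therefore assume (1) in the strict sense (only the two switching strategies tie), rather than expect it to follow generically.
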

\begin{proof}
    First, assume that there is a stable BRD walk whose sequence of play is $c$. This implies that there exists an open set of points whose sequence of play is $c$. Any such BRD path intersects the set $H$ which is the intersection of the regions $\arg\max^{-1} c_1$ and $\arg\max ^{-1}c_K$, and so there exists some points (indeed, an open set of points) which converge to this sequence and which lie on $H$. Let $w$ be one such point. Because points sufficiently close to the periodic walk $c$ have a unique BRD trajectory, we can assume that the trajectory from $w$ is unique without loss of generality.
    
    We then define a sequence $w^{(i)}$ where $w^{(i)}$ is the $i$th intersection of the BRD trajectory from $w$ with $H$. By Corollary~\ref{poincare matrix property}, $w^{(i+1)} = M_c w^{(i)}$. In other words, the sequence $w^{(i)}$ is a discrete-time linear dynamical system governed by the matrix $M_c$. The long-run behavior of the BRD trajectory from $w$ is hence determined by the maximal eigenvalues of $M_c$.

    (\emph{Claim:} there is a maximal eigenvector $\lambda$ which is real, and whose eigenvector $\hat w$ lies in $H$)

    Given $M_c$, denote by $K$ the set of points $w$ in $H$ where the BRD sequence of play $w$ follows $c$ forever. It is easy to see that $K$ is closed convex cone, as the constraints defining a BRD path are linear, given by the argmax. Further, because $c$ is persistent, $K$ contains an open set, and so it is also proper. By the definition of $K$, $M_c$ leaves $K$ invariant, that is $M_c K \subseteq K$. By the Generalized Perron-Frobenius Theorem \cite{berman1994nonnegative}, there is a real eigenvalue $\lambda$ that is maximal ($ = \rho(M_c)$) and whose eigenvector lies in $K$.
    
    (\emph{Claim}: $\lambda$ is simple and dominant and $\lambda > 1$).

    The eigenvalue $\lambda$ is simple if the dimension of its eigenspace is one, and it is dominant if all other eigenvalues have a strictly smaller magnitude. If this were false, then the eigenspace spanned by maximal-magnitude eigenvectors would have dimension greater than one. In particular, no eigenvector is isolated, so the neighborhood around $\hat w$ intersects the maximal eigenspace and contains infinitely many vectors in this space. This contradicts the stability of $c$, which requires that the limit points be locally isolated. 

    Suppose that $|\lambda| < 1$. Then the limit of every point in an open set around $w$ is the origin, but this set has measure zero and so contradicts the fact that BRD is volume-conserving in payoff space \cite{swenson_best-response_2018}. Now we want to show that $\lambda > 1$. Suppose for contradiction that $\lambda = 1$. Then all eigenvalues of $M_c$ are bounded by one, and so all orbits are bounded. Under iteration, our initial open set $U$ with positive measure converges to a bounded region lying on the 1-dimensional $\lambda$-eigenspace, which again contradicts volume conservation of BRD in payoff space.
    
    

    (\emph{Claim}: Properties (1) and (2) hold.)
    For any $i$, define $T^i_j$ to be the time spent playing the profile $c_j$ in the $i$th iteration around $c$ from $w$. By Lemma~\ref{arc matrix property}, this is exactly
    \[
    T^i_j = c^T_{a_j}M_{a_j}M_{a_{j-1}}\dots M_{a_1} w^{(i)}
    \]
    Because the sequence of play follows $c$ repeatedly, this value is always positive, and further it is bounded below. Hence in the limit $\hat w$, the time interval
    \[
    \hat T_j = c^T_{a_j}M_{a_j}M_{a_{j-1}}\dots M_{a_1} \hat w
    \]
    must be non-negative, as it is the limit of a positive sequence.


    For the converse, suppose that $\lambda$ is a dominant eigenvalue that is real and at least 1, and $\hat w$ is a $\lambda$-eigenvector satisfying (1) and (2), where (2) holds strictly. Properties (1) and (2) tells us that the BRD path from $\hat w$ follows the walk $c$, and that the time $\hat T_i$ spent at each profile in $c$ on this BRD path is strictly positive, and hence bounded below. Finally, because the times are bounded below, there is an $\epsilon$-neighborhood of $\hat w$ where the trajectory also follows $c$ forever, so it is persistent. In other words, $\hat w \in\intr K$. As $\lambda$ is dominant, then all points in a sufficiently small neighborhood around $\hat w$ where all point converge to the eigenspace of $\hat w$, giving stability.
\end{proof}

Note that the Poincar\'e map of a cycle $c$ depends on where we decide to `start' the cycle---which profile we call the first one. Hence if $c_i = p_i,p_{i+1 \mod K},,\dots,p_{i + K - 1 \mod K}$, the matrices $M_c$ and $M_{c_i}$ will be different. However, these matrices have the same eigenvalues, and their eigenvectors related in a natural way.

\begin{lem}
    Let $c = p_1,p_2,\dots,p_K$ be a periodic walk, and let $c_i = p_i,p_{i+1},\dots,p_K,p_1,\dots,p_{i-1}$ be a rotation of that periodic walk from the $i$th index. Then $M_c$ and $M_{c_i}$ have the same set of eigenvalues.
\end{lem}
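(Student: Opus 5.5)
The plan is to write both Poincar\'e matrices as products of the same arc matrices and invoke the standard fact that $AB$ and $BA$ have the same eigenvalues (indeed the same characteristic polynomial). With arcs $a_1,\dots,a_K$ of $c$ (where $a_j$ goes from $p_j$ to $p_{j+1}$, indices mod $K$), the rotated walk $c_i$ uses the arcs $a_i,a_{i+1},\dots,a_K,a_1,\dots,a_{i-1}$ in that order. Hence
\[
M_{c_i} = M_{a_{i-1}}\cdots M_{a_1} M_{a_K}\cdots M_{a_i}.
\]
Setting $A := M_{a_K}\cdots M_{a_i}$ and $B := M_{a_{i-1}}\cdots M_{a_1}$, we get $M_c = AB$ and $M_{c_i} = BA$.

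The key step is then the identity $\det(tI - AB) = \det(tI - BA)$ for square matrices of the same size. This follows from Sylvester's determinant identity, or from a density argument: for invertible $B$, $BA = B(AB)B^{-1}$ is similar to $AB$, and the general case follows by continuity of the characteristic polynomial in the entries. Here even the density argument is unnecessary whenever every arc matrix is invertible. Each arc matrix $M_a = I + u(p)c_a^T$ is a rank-one perturbation of the identity, so by the matrix determinant lemma $\det M_a = 1 + c_a^T u(p)$. The vector $c_a^T u(p)$ picks out $(u^i_{s_1}(p) - u^i_{s_2}(p))/W_{p,q} = -1$, which would make $\det M_a = 0$. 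So the arc matrices are in fact singular: the component of $w$ along $c_a$ is collapsed at the switch, consistent with $s_1$ and $s_2$ being tied at the switching time. We therefore cannot rely on similarity and must use the general $AB$/$BA$ fact.

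Finally, I would record the relation between eigenvectors, which the paper seems to want for later use. If $M_c \hat w = \lambda \hat w$ with $\lambda \neq 0$, then $M_{c_i}(B\hat w) = BAB\hat w = \lambda B\hat w$, and $B\hat w \neq 0$ because $AB\hat w = \lambda \hat w \neq 0$. So $B\hat w = \hat w_{i-1}$, the point reached after the first $i-1$ switches, is a $\lambda$-eigenvector of $M_{c_i}$. Conversely, $A$ maps $\lambda$-eigenvectors of $M_{c_i}$ back to eigenvectors of $M_c$, so nonzero eigenvalues correspond with equal geometric multiplicity. The only subtle point, and the one I expect to be the main obstacle, is the zero eigenvalue, whose geometric multiplicity may differ between $AB$ and $BA$. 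This is harmless because the statement concerns only the set of eigenvalues, and the characteristic polynomials coincide, so algebraic multiplicities agree as well.
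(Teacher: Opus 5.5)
Your proposal is correct and follows the same route as the paper. Both write $M_c = AB$ and $M_{c_i} = BA$ with $A = M_{a_K}\cdots M_{a_i}$ and $B = M_{a_{i-1}}\cdots M_{a_1}$, and transfer eigenvectors via $\hat w \mapsto B\hat w$; your appeal to $\det(tI-AB)=\det(tI-BA)$, together with your correct observation that each arc matrix is singular ($c_a^T u(p) = -1$), also covers the case $\lambda = 0$ where $B\hat w$ can vanish, which the paper's eigenvector argument passes over.
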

\begin{proof}
    Suppose $\lambda$ is an eigenvalue, and $\hat w$ a $\lambda$-eigenvector. Then define $M_{c^2_i} := M_{a_K} M_{a_{K-1}}\dots M_{a_i}$ and $M_{c^1_i} := M_{a_{i-1}} M_{a_{K-1}}\dots M_{a_1}$. Observe that $M_c = M_{c^2_i} M_{c^1_i}$ and $M_{c_i} = M_{c^1_i} M_{c^2_i}$. Let $\hat w_{c_i} := M_{c^1_i} \hat w$. $ \lambda \hat w = M_{c}\hat w = M_{c^2_i} M_{c^1_i} \hat w = M_{c^2_i} \hat w_{c_i}$ and so $\lambda \hat w_{c_i} = \lambda M_{c^1_i} \hat w = \lambda M_{c^2_i} M_{c^1_i} \hat w_{c_i} = \lambda M_{c_i} \hat w_{c_i}$. Hence $\lambda$ is an eigenvalue of $M_{c_i}$, and $\hat w_{c_i}$ is a $\lambda$-eigenvector.
\end{proof}

\subsection{4-cycles} \label{sec: 4cycle}

In this section we discuss the special case of 4-cycles. Here, a simpler lemma suffices to characterize stability, and it gives useful intuition of why Nash equilibrium convergence is common in small games. The lemma involves an explicit construction of the Poincar\'e map, and provides a useful demonstration of this technique on a small game.

\begin{lem}[4-cycle stability]
    A 4-cycle is persistent if and only if it spans the support of a Nash equilibrium.
\end{lem}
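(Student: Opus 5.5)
The plan is to reduce the 4-cycle to a $2\times 2$ subgame and compute the Poincar\'e matrix $M_c$ explicitly. A 4-cycle in the preference graph alternates deviations between two players, say $i$ and $j$, each switching between two strategies: $i$ between $s_1,s_2$ and $j$ between $r_1,r_2$. The four profiles are $(s_1,r_1),(s_2,r_1),(s_2,r_2),(s_1,r_2)$, with all other players' strategies fixed. (A cycle with three or more switching players cannot close up in four steps with only unilateral deviations around a square, so this is the only case.) Along $c$, only the payoff coordinates of $i$ and $j$ that are relevant to switching matter. These are the differences $d_i = w^i_{s_1}-w^i_{s_2}$ and $d_j = w^j_{r_1}-w^j_{r_2}$. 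The other coordinates grow linearly and only need to stay below the argmax, which is an open condition that holds near the cycle. So the plan is to project $M_c$ onto the two-dimensional space of $(d_i,d_j)$, using the arc matrices of Definition~\ref{def: arc matrix}.

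In these coordinates each arc acts as a shear. When player $i$ switches, $d_i$ is driven from its current value to $0$, and then crosses zero. Meanwhile $d_j$ changes at rate equal to $j$'s payoff difference against $i$'s current strategy, for time $|d_i|/W$, where $W$ is the arc weight. Write $\alpha_1,\alpha_2>0$ for the weights of $i$'s two arcs and $\beta_1,\beta_2>0$ for $j$'s. Then the half-cycle maps are products of $2\times 2$ matrices of the form $\begin{pmatrix}1&0\\ \beta/\alpha&1\end{pmatrix}$ composed with sign flips. The next step is to restrict to the Poincar\'e section $H$ where $d_i=0$. On this section the return map is one-dimensional, namely $d_j\mapsto \mu\, d_j$, with $\mu$ a ratio of products of weights. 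I expect $\mu = \frac{\alpha_1\beta_1}{\alpha_2\beta_2}$ up to the labelling of the arcs. The key identity to verify is that $\mu=1$ exactly when the $2\times 2$ subgame (as a bimatrix game with these weight differences) has a fully mixed equilibrium. In fact the subgame always has one, because the preference graph cycles: the indifference mixtures are $x^j = \beta$-weighted and $x^i=\alpha$-weighted.

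With this in place I would argue as follows, using the volume-conservation argument from the proof of Theorem~\ref{thm: stability test FP}. If $\mu<1$, the open set of initial conditions contracts to the origin of the $(d_i,d_j)$ plane. If $\mu>1$, the same happens under reversed time. In both cases the set has zero measure in the limit, which contradicts volume conservation. So a persistent 4-cycle forces the return map to be an isometry on the section. Every orbit is then a closed polygon in $(d_i,d_j)$ around the origin, so the play is persistent but not isolated. For the ``spans the support of a Nash equilibrium'' part, two things must hold. First, the time-averaged play converges to the mixed equilibrium of the $2\times 2$ subgame. Second, persistence requires that no other strategy of any player ever becomes a best response. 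By the BRD dynamics, the time-average is the relevant opponent mixture, and other strategies must be weakly worse against it in the limit. This is exactly the condition that the subgame equilibrium extends to a Nash equilibrium of the full game with support the four profiles. Conversely, if the subgame equilibrium is a Nash equilibrium, generically with strict inequalities for outside strategies, then near it the closed orbits never trigger an outside best response, which gives an open set.

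The main obstacle I expect is the computation of the return map itself, namely getting the correct signs and weight ratios and showing that the resulting linear map has determinant/eigenvalue exactly 1 on the section. The ``$\Leftarrow$'' direction is also delicate. Outside strategies grow with the time-averaged opponent profile, and the trajectories have unbounded amplitude, so I will need to compare the linear growth of outside payoffs with the linear growth of on-cycle payoffs. I would handle this with the strict Nash inequalities, possibly after assuming genericity.
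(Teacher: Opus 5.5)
Your overall plan is the paper's: reduce to the $2\times 2$ subgame, compute the Poincar\'e return map, then compare each outside strategy's payoff over one lap with its payoff at the mixed fixed point. However, the central computation is mispredicted, and the reasoning built on it does not hold together.

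\textbf{The return map.} On the section $d_i=0$ the map is not $d_j\mapsto\frac{\alpha_1\beta_1}{\alpha_2\beta_2}d_j$. It is the identity for \emph{every} 4-cycle, whatever the weights.
\begin{itemize}
\item At each profile $c_k$, the coordinate about to reach zero moves at the weight of the arc leaving $c_k$. The other coordinate moves at the weight of the arc entering $c_k$.
\item So the phase at $c_k$ rescales the amplitude by (entering weight)/(leaving weight).
\item Around the square each weight appears once in each role, so the product telescopes to $1$.
\end{itemize}
This is exactly the paper's computation $\alpha_3T_4=\alpha_1T_1=x$, i.e.\ $w_4=w_0$. Your ``key identity'' ($\mu=1$ exactly when the subgame has a mixed equilibrium) is therefore confused. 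You yourself note that such an equilibrium always exists. With your formula the identity would be false, and it would contradict the ``if'' direction whenever $\alpha_1\beta_1\neq\alpha_2\beta_2$.

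\textbf{The volume argument.} It should not be conditioned on persistence, because $\mu$ depends only on the weights. To be rigorous it must be run in the $(d_i,d_j)$-plane, where the reduced system $\dot d_i=f(d_j)$, $\dot d_j=g(d_i)$ is divergence-free. Contraction of a projection does not contradict volume conservation in the full payoff space. Run that way, the argument forces $\mu=1$ unconditionally, which is another way to see that your formula is wrong.

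\textbf{The Nash comparison.} Once every lap has the same durations $T_1,\dots,T_4$, persistence is governed by the constant per-lap change in each outside strategy's deficit. This is how the paper concludes.
\begin{itemize}
\item Your time-average argument handles the two cycling players: boundedness of $d_i,d_j$ pins the time-averaged marginals to the indifference mixtures.
\item For a third player, whose strategy is fixed along the cycle, what matters is the joint lap distribution on the four profiles, proportional to $(T_1,T_2,T_3,T_4)$.
\item ``Weakly worse against the time average'' is the Nash condition only if this distribution equals $\hat x^i\otimes\hat x^j$. That requires checking $T_1T_3=T_2T_4$. It does hold, and it is what the paper's explicit third-player formula verifies. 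Without it you only have a condition against a correlated distribution.
\end{itemize}

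\textbf{Genericity is unnecessary.} Neither genericity nor strict inequalities is needed. Your worry about unbounded amplitude is unfounded: since $\mu=1$, the cycle amplitude $x$ is fixed, and only the common drift grows, which cancels in payoff differences.
\begin{itemize}
\item The Nash condition with weak inequalities gives non-positive net drift per lap. Each deficit then stays above its initial value minus $O(x)$. Starting all outside strategies sufficiently low therefore gives an open persistent set.
\item A strict violation gives positive drift and hence an eventual switch.
\end{itemize}
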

\begin{proof}
    First, observe that any cycle requires at least two deviations by any player who participates in it. Hence in a 4-cycle, only two-players participate, and the profiles $c_1,c_2,c_3,c_4$ on the 4-cycle must span a $2\times 2$ subgame of the game. This allows us to treat the game as if it were $2\times 2$, writing $c= (s_1,r_1)$, $c_2=(s_1,r_2)$, $c_3=(s_2,r_2)$ and $c_4=(s_2,r_1)$. Let $\alpha_1,\alpha_2,\alpha_3,\alpha_4$ be the weights on the arcs between the nodes. By strategic equivalence, we can assume the payoffs are normalized so that the payoffs for strategies $s_1$ and $r_1$ are always zero (and any other players who do not deviate on this cycle receive zero payoff).

    This $2\times 2$ subgame necessarily contains a fixed point $\hat x$, whose distribution for players 1 and 2 is respectively $((\frac{\alpha_4}{\alpha_2 + \alpha_4},\frac{\alpha_2}{\alpha_2 + \alpha_4}),(\frac{\alpha_3}{\alpha_1 + \alpha_3},\frac{\alpha_1}{\alpha_1 + \alpha_3}))$. By normalization, the expected payoff at this point to all players is zero.
    
    Now let $w_0$ be a point on the intersection $H$ of $\arg\max^{-1}(c_1)$ and $\arg\max^{-1} c_4$. By normalization, $w_0 = ((0,0),(0,-x))$ for some $x$. The next point of indifference, between $c_1$ and $c_2$, occurs at point $w_1$ after time $T_1$. These can be computed explicitly, as $T_1 = x/\alpha_1$, and $w_1 = ((0,-\alpha_4 T_1),(0,0))$. Repeating this same calculation gives us: $T_2 = \alpha_4 T_1/\alpha_2$, and $w_2 = ((0,0),(0,\alpha_1 T_2))$; $T_3 = \alpha_1 T_2/\alpha_3$, and $w_3 = ((0,\alpha_2 T_3),(0,0))$ and finally $T_4 = \alpha_2 T_3/\alpha_4$, and $w_4 = ((0,0),(0,-\alpha_3 T_4))$.

    By definition of the Poincar\'e matrix, $w_4 = M_c w_0$. In fact, $w_4 = w_0$, because
    \[
    \alpha_3 T_4 = \frac{\alpha_3 \alpha_2 T_3}{\alpha_4} = \frac{\alpha_2 \alpha_1 T_2}{\alpha_4} = \alpha_1 T_1 = x\ .
    \]
    We conclude that $M_c$ acts as the identity on the subspace corresponding to this $2\times 2$ subgame.

    Suppose that $g$ is an alternative strategy, assumed for player 1, where $g$ receives payoff $y$ in profiles $c_1$ and $c_2$ and payoff $z$ in the profiles $c_3$ and $c_4$. If the \emph{net} payoff for $g$ over the cycle is positive, then eventually the BRD path must switch away from the cycle. More generally, the persistence of this cycle is exactly determined by the presence of a strategy for some player which earns positive net payoff relative to the payoffs on the cycle. The net payoff for $g$ is exactly
    \[
    (T_1 + T_2) y + (T_3 + T_4) z = (\frac{x}{\alpha_1} + \frac{\alpha_4 x}{\alpha_1\alpha_2})y + (\frac{\alpha_4 x}{\alpha_2\alpha_3} + \frac{x}{\alpha_3})z = x\frac{\alpha_2 + \alpha_4}{\alpha_1\alpha_3} (\alpha_3 y + \alpha_1 z)
    \]
    The sign of the net payoff to $g$ is determined by $\alpha_3 y + \alpha_1 z$, but expression also determines the sign of the payoff for $g$ at the fixed point $\hat x$, which is
    $\frac{\alpha_3 y + \alpha_1 z}{\alpha_1 + \alpha_3}$.
    A symmetric construction works for player 2. Finally, we must do the same thing for some other player $i$, who does not deviate in this cycle. For any arbitrary strategy $g$, they receive payoffs $a$, $b$, $c$ and $d$ at $c_1$, $c_2$, $c_3$ and $c_4$ respectively, which gives them a payoff at the fixed point $\hat x$ given by:
    \[
    \frac{1}{(\alpha_1 + \alpha_3)(\alpha_2 + \alpha_4)}(a \alpha_2\alpha_3 + b \alpha_3\alpha_4 + c \alpha_1\alpha_4 + d \alpha_1\alpha_2)
    \]
    Similarly, the payoff along the 4-cycle for strategy $g$ is given by
    \[
    a \frac{x}{\alpha_1} + b\frac{\alpha_4 x}{\alpha_1\alpha_2} + c \frac{\alpha_4 x}{\alpha_2\alpha_3} + d \frac{x}{\alpha_3} = x\alpha_1\alpha_2\alpha_3 (a \alpha_2\alpha_3 + b \alpha_3\alpha_4 + c\alpha_1\alpha_4 + d\alpha_1\alpha_2)
    \]
    As before, the bracketed term necessarily has the same sign as the fixed point payoff for $g$. We conclude that no player ever deviates from this cycle if and only if no other strategy provides strictly better than equilibrium payoff, which is the condition for this fixed point to be a Nash equilibrium.
\end{proof}

Note that the 4-cycle is only persistent, not stable (recall Definition~\ref{def: periodic play}). This is because, in payoff space, each trajectory following the 4-cycle is periodic. In other words, all maximal eigenvalues of the Poincar\'e matrix have magnitude 1. This can be equivalently thought of as a zero-sum property---every $2\times 2$ game whose preference graph is a cycle is strategically equivalent to a zero-sum game \cite{biggar_graph_2023}. The strategy space of these games is the unit square, and under the replicator dynamic every interior point (other than the Nash equilibrium itself) follows a periodic orbit circling the Nash equilibrium. Like BRD, the time-average of the replicator dynamic converges.

Berger \cite{berger_fictitious_2005} proved that FP always converges to Nash equilibrium in $2\times n$ games. His analysis showed precisely that almost every starting point converges to either a pure Nash equilibrium or a 4-cycle, and hence to a mixed Nash equilibrium. Krishna and Sj\"ostrom \cite{krishna_convergence_1998} show that 4-cycles are special in this sense: in a two-player game, any cycle longer than four does not converge to Nash equilibrium, explaining the distinction between $2\times n$ games and $3\times 3$ games such as Shapley's.

\section{Stable cycles under the replicator dynamic} \label{sec: RD}

So far, our analysis of stable cycles has been restricted to BRD. There, the piecewise linearity of trajectories allowed us to analyze stability using linear algebra. The highly nonlinear structure of the replicator dynamic makes it difficult to see how to extend this reasoning. More generally, results identifying cyclic patterns of behavior for the replicator are far less well-developed than for BRD. Here we prove an important theorem: we can reduce the problem of studying stable cycles of the replicator to studying stable cycles of BRD. More specifically, if we have a stable periodic walk under BRD, then it is stable under the replicator dynamic. 

\begin{thm} \label{thm: RD stability = FP stability}
    Let $c$ be a periodic walk in the preference graph. Then $c$ is stable  under BRD if and only if it is stable under the replicator dynamic.
\end{thm}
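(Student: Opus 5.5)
The plan is to work entirely in payoff space and compare the two flows far from the origin. There, RD (Definition~\ref{def: rd payoff}) differs from BRD (Definition~\ref{def: brd payoff}) only by the softmax smoothing. The central fact I would establish first is the following. If $w$ lies in the cone $\arg\max^{-1}p$ for a pure profile $p$, and its distance to the boundary of that cone is at least $R$, then $\softmax w^j$ is within $O(e^{-R})$ of the vertex $p^j$ for every player. Hence along such segments the RD velocity agrees with the BRD velocity $u(p)$ up to an exponentially small error. I would also use the positive homogeneity of the BRD vector field in payoff space: BRD trajectories are invariant under $w\mapsto \alpha w$ up to time rescaling. This is why stability in Definition~\ref{def: periodic play} is phrased through the normalized trajectory $w/\|w\|$.

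For the direction BRD $\Rightarrow$ RD, I would apply Theorem~\ref{thm: stability test FP}. It supplies a simple dominant eigenvalue $\lambda>1$ of $M_c$ and an eigenvector $\hat w$ in the interior of the invariant cone $K\subset H$. I would pick a closed conic neighborhood $V\subset \intr K$ of the ray through $\hat w$ with $M_cV\subset V$; this exists because $\lambda$ is dominant, so $M_c$ contracts projectively toward $\hat w$. I would then compare the RD Poincar\'e return map on $H$ with $M_c$ from a point $w\in V$ with $\|w\|=r$ large. Along each segment of $c$ the BRD path stays in the interior of a cone at distance proportional to $r$ from its walls, except near the switching hyperplanes. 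So the RD path tracks the BRD path with error $O(1)$ per switch, using the $e^{-R}$ estimate integrated over a time proportional to $r$, plus a bounded transition layer near each switch. This gives $P_{RD}(w)=M_cw+E(w)$ with $\|E(w)\|\le C$ independent of $r$. Since $\|M_c^n w\|$ grows like $\lambda^n r$, the relative errors $C/(\lambda^n r)$ are summable. Hence the projective images stay in $V$, the sequence of play is $c$ forever, and $w/\|w\|$ converges to the same limit $\hat w/\|\hat w\|$ as under BRD. This gives both persistence, from an open set of large $w$ in a conic neighborhood, and isolation.

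For the converse, suppose $c$ is stable under RD, with open set $U$. The normalized RD orbits then have a common limit. Since the RD path's norm must grow (otherwise the switching times would not all remain positive and separated), the perturbation estimate above shows that the normalized RD Poincar\'e map is asymptotically the projective action of $M_c$. The common limit must therefore be a fixed ray of $M_c$ lying in the closed cone $K$. Volume conservation of RD in payoff space, which is also standard for exponential weights, rules out contraction, giving $\lambda\ge 1$ and in fact $\lambda>1$ exactly as in the proof of Theorem~\ref{thm: stability test FP}. Isolation of the limit for an open set forces the dominant eigenvalue to be simple. I would then transfer the open set to BRD: RD points far out that follow $c$ lie within $O(1)$ of BRD points following $c$ at scale $r$, so by homogeneity an open cone of BRD initial conditions follows $c$. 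The hypotheses of Theorem~\ref{thm: stability test FP} then hold.

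I expect the main obstacle to be the uniform error bound $\|E(w)\|\le C$ across a switch. Near the indifference hyperplane the softmax is genuinely mixed, so RD does not switch at exactly the BRD time. I would control this by noting that in a bounded-width layer around the switching hyperplane the relevant payoff difference changes at rate at least the arc weight $W_{p,q}>0$. The layer is therefore crossed in bounded time, contributing $O(1)$ error. The remaining care is ensuring that strategies of players not involved in the switch stay exponentially suppressed, which follows from the strictness of (2) and the interior position of $\hat w$.
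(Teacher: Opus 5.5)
\textbf{Where the proposal breaks.} Your forward direction follows the paper's strategy: compare the RD return map on $H$ with $M_c$ at large scale, and let geometric growth absorb the discrepancy. Your bounded-error version, $P_{RD}(w)=M_cw+E(w)$ with $\|E(w)\|\le C$ and summable relative errors, is an adequate substitute for the paper's vanishing-error Proposition~\ref{replicator - FP similarity}.

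The genuine gap is the first step of the converse. You assert that the RD orbit's norm must grow because ``otherwise the switching times would not all remain positive and separated.'' That implication is false. In Matching Pennies every interior RD orbit is periodic in payoff space, hence bounded. Yet its sequence of play follows the 4-cycle forever, with switching times bounded away from zero. This is exactly the persistent-but-not-stable situation of Section~\ref{sec: 4cycle}. So persistence alone gives no growth, and without growth your large-$\|w\|$ comparison with $M_c$ never becomes available.

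What excludes bounded orbits is the \emph{isolation} clause of stability combined with volume preservation of RD in payoff space. This is how the paper argues: an open set of bounded orbits sharing one locally isolated limit is incompatible with a volume-preserving flow. In your plan, isolation is used only to get simplicity of $\lambda$. Volume conservation is used only to bound the eigenvalue of $M_c$, after the comparison is already in place. So neither does this job.

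\textbf{Two further points need attention.} First, Theorem~\ref{thm: stability test FP} yields property (2) only non-strictly. So $\hat w\in\intr K$ is not supplied by it, yet your invariant cone $V$ and your exponential-suppression estimate both rest on it. You must assume or prove it.

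Second, in the converse you ``transfer the open set to BRD'' using BRD points that follow $c$ at scale $r$. This presupposes the BRD persistence you are trying to establish, since the $O(1)$ estimate needs BRD to follow $c$ with all gaps of order $r$. Before the comparison can run in that direction, you must show that the common RD limit ray lies in the relative interior of $K$, for instance that the normalized RD switching times converge to strictly positive limits. The paper is also brief at this point, but this is where the real content of the converse lies.
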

\begin{proof}
    To begin, define $H$ to be the cone in payoff space defined by $\arg\max^{-1} c_1 \cap \arg\max^{-1}(c_K)$. Suppose that $w$ is a point in $H$ such that the sequence of play of either RD or BRD follows the cycle $c$ and returns to again intersect $H$. Then we denote by $RD_H(w)$ and $BRD_H(w)$ respectively the subsequent intersection of the trajectory of $w$ with $H$. If BRD follows the $c$ from $w$, then by Theorem~\ref{poincare matrix property} we know that $BRD_H(w) = M_c w$. The critical fact---which will require the most technical arguments of the proof---is the following: as $\|w\|$ grows large, the trajectories of $RD_H$ and $BRD_H$ approach each other.

    \begin{prop}[Asymptotic Linearity] \label{replicator - FP similarity}
    Let $w_0$ be a point where all points in the $\delta$-neighborhood of $w_0$ follow $c$ forever, and are bounded away from the boundary, for some sufficiently small $\delta > 0$. Then $\| RD_H(w_0) - BRD_H(w_0)\| \to 0$ as $\|w_0\| \to \infty$.
    \end{prop}
\begin{proof}[Proof of Proposition~\ref{replicator - FP similarity}]
    First, we will express our initial point as $w_0 = \gamma w$, where $\|w\| = 1$. Hence we can express our goal as showing that $\| RD_H(\gamma w) - BRD_H(\gamma w)\|/\gamma \to 0$ as $\gamma \to\infty$. We then define a sequence of time points $T_1,T_2,\dots,T_K$ where BRD switches occur on the path from $w$ to $BRD_H(w)$.  By linearity, starting from $\gamma w$ the BRD switches occur at the same times but scaled by $\gamma$: $\gamma T_1, \gamma T_2,\dots,\gamma T_m$. For each $v$ in $\| v - w\| < \delta$, the size of each $T_i$ is bounded below, and we let $T_{min}$ be a lower bound for this set. Next, we divide the total time $\gamma T_{total} = \gamma \sum_i T_i$ into a finite number $\rho$ of intervals of length $\tau \gamma$, where $\tau$ is some constant with $\tau < T_{min}$. This implies that there is always \emph{at most one} player who switches strategy in that interval under BRD. Note that (crucially) $\rho$ is independent of $\gamma$. For any starting point $\gamma w'$ in the $\delta$-neighborhood of $\gamma w$, we get the same collection of $\rho$ intervals, with the length of each interval scaling with $\gamma$. We assume that we pick $\tau$ such that no BRD switch from $w'$ lands exactly on the endpoint of an interval, and we define $\Delta > 0$ to be a lower bound for the distance from a BRD switch and the endpoint of one of these intervals.
    

    Let $f_0,f_1,\dots, f_\rho$ be the sequence of points reached by the BRD path from $w$ after elapsed times $0,\tau,2\tau,\dots,\rho\tau$, with $f_0 = w$ and $f_\rho = BRD_H(w_0)$. 
    Similarly, we define $r^\gamma_0,r^\gamma_1,\dots,r^\gamma_\rho$ to be the sequence of points on the replicator trajectory from $\gamma w$ at these times. Our goal is to show that the distance between $\gamma f_\rho$ and $r\gamma_\rho$ vanishes as $\gamma \to\infty$. 
    
    We can break up this distance in terms of the error at each step, so
    \begin{align*}
    \|RD_H(\gamma w) - BRD_H(\gamma w) \| = \|r^\gamma_\rho - \gamma f_\rho\| &=  \left\| \int_{0}^{\rho\tau\gamma} (\rd(\gamma w) - \brd(\gamma w))\dx t\right\| \\
    &\leq \sum_{i=1}^\rho \left\| \int_{0}^{\tau\gamma} (\rd(r^\gamma_i) - \brd(\gamma f_i))\dx t\right\|
    \end{align*}
    where we define $\rd$ and $\brd$ to the defining differential equations of RD and BRD respectively in payoff space, as in Definitions~\ref{def: rd payoff} and \ref{def: brd payoff}. To show this entire sum vanishes as $\gamma$ grows, we will use the fact that, if 'total error' $E = \|r^\gamma_i - \gamma f_i\|$ up to the $i$th step is small, then the additional error from this step
    $\| \int_{0}^{\tau\gamma} \rd(w) - \brd(f) \dx t\|\to 0 $ 
    vanishes as $\gamma$ grows. From there, the fact that the whole sum vanishes will result from induction with $\|r^\gamma_0 - \gamma f_0\| = 0$, and the fact that the number of intervals is a constant in $\gamma$. For simplicity of notation we write $w := r^\gamma_i$, $f := \gamma f_i$ and $E  = \|w-f\|$.
    There are two cases to consider: the first where the $i$th interval contains no BRD switches, and the second where it contains exactly one.

    (\emph{Case 1}: the $i$th interval contains no BRD switches)

    Let $c$ be the strategy profile played in this time interval from $f$. Fix a player $j$ and strategy $s$ for $j$. The gradient of BRD from $f$ is exactly the vector $u^j(c)$ in this interval. On the other hand, $\rd(w)_s = \sum_{p} u^j(p) z_p = \sum_{p_{-j}} u^j(p_{-j}) z_{p;-j}$ as the utility for player $j$ does not depend on $x^j$. Then
\begin{align*}
    \left| \int_{0}^{\tau\gamma} (\rd^j(w)_s - \brd^j(f)_s )\dx t\right| &= \left | \int_{0}^{\tau\gamma} (\rd^j(w)_s - u^j_s(c))\dx t\right | = \left | \int_{0}^{\tau\gamma} (\sum_p u^j_s(p) z_p - u^j_s(c) )\dx t\right | \\
    &= \left | \int_{0}^{\tau\gamma} (u^j_s(c)(1 - \sum_{p\neq c} z_p) + \sum_{p\neq c} u^j_s(p) - u^j_s(c))\dx t\right | \\
    &\leq \sum_{p\neq c} |u^j_s(p) - u^j_s(c)| \int_{0}^{\tau\gamma} z_p\dx t
\end{align*}

Because $c$ is the current pure profile of best responses for $f$, for each profile $p\neq c$, there is at least one player $k$ for whom $p_k$ is not a best response in $f$. Further, because no player switches under BRD within time $\Delta \gamma$ of the current interval from $f$, we get that $f^k_{c_k} - f^k_{p_k} > \mu \Delta \gamma$, as $\mu$ is the least difference in payoff between any pair of strategies. Because $\|w - f\| < E$, we get
\[
z_p = \prod_i x^i_{p_i} \leq x^k_{p_k} = \frac{\exp(w^k_{p_k})}{\sum_s \exp{(w^k_{s})}} = \frac{\exp(w^k_{p_k} - w^k_{c_k})}{1 + \sum_{s\neq c_k} \exp{(w^k_{s} - w^k_{c_k})}} \leq \exp(w^k_{p_k} - w^k_{c_k}) \leq \exp(-\mu\Delta \gamma + 2E)
\]

Finally, we can bound the integral $\int_0^{\tau\gamma} z_p\dx t$ by $\tau\gamma \exp(-\mu\Delta \gamma + 2E)$ and conclude, for any $j$ and $s$,
\begin{equation}
\label{bound}
\left| \int_{0}^{\tau\gamma} (\rd^j(w)_s - \brd^j(f)_s )\dx t\right| < O(\gamma \exp(-\alpha \gamma + 2E))
\end{equation}
for some $\alpha > 0$. For $E$ sufficiently small, the quantity inside the exponent is negative, and so it vanishes as $\gamma \to\infty$..

(\emph{Case:} A single BRD switch in the $i$th interval.)

We assume that the switch occurs at time $i\tau\gamma + q\gamma$, with $0 < q < \tau$. We let $c_1$ and $c_2$ be the profiles we switch from and to respectively, and assume w.l.o.g. that the switching player is player 1. We let $a$ and $b$ be the strategies played by player 1 in profiles $c_1$ and $c_2$ respectively. Note, importantly, that for player 1 the calculation is the same as before. This is because the payoff to player 1 depends only on the other players, none of whom switch in this interval. Hence we need only bound the deviation between RD and BRD over this interval for some player $j\neq 1$ and strategy $s$:
\begin{align*}
    \left| \int_0^{\tau\gamma} (\rd^j(w)_s - \brd^j(f)_s )\dx t\right|
    &= \left| \int_{0}^{\tau\gamma} \sum_p u^j(p)z_p\dx t - u^j(c_1)q\gamma - u^j(c_2) (\tau\gamma - q\gamma) \right| \\
    & \leq \sum_{p\neq c_1,c_2} u^j(p)\left|\int_{0}^{\tau\gamma} z_p\dx t\right | + u^j(c_1)\left | \int_{0}^{\tau\gamma}z_{c_1} \dx t - \gamma q\right |  \\ &+ u^j(c_2)\left|\int_{0}^{\tau\gamma}z_{c_2} \dx t- (\tau\gamma - \gamma q)) \right|
\end{align*}

We shall show that, if $E$ is small, then all of these terms vanish as $\gamma \to \infty$. We begin with the first term. This term can be bounded in much the same way as the previous case (without switches). In this interval, the best response of any player $j\neq 1$ is $c^j_1$ (which is the same as $c^j_2$). For player 1, the best-response could be either $a$ or $b$. For any profile $p$ that is not $c_1$ or $c_2$, we either have some player $j$ playing a strategy not equal to $c^j_1$, or player 1 playing a strategy other than $a$ or $b$. In each case, there is a player $k$ playing a strategy $p_k$ where $f^k_{c_k} - f^k_{p_k} > \mu \Delta \gamma$, because we do not switch to that strategy within time $\Delta \gamma$ (either side) of the current interval in the BRD path from $f$. As before, we can bound the expression
\[
\sum_{p\neq c_1,c_2} u^j(p)\left|\int_{0}^{\tau\gamma} z_p\dx t\right | \leq O(\gamma \exp(-\gamma \alpha + 2E))
\] for some constant $\alpha > 0$, and this goes to zero as $\gamma \to\infty$ for small enough $E$. This leaves us with two terms to bound, both of which have essentially the same structure and for which we will use the same argument. We will create an upper bounds for this term using the logistic function, and show that it vanishes as $\gamma$ grows and $E$ vanishes. Observe that

\[
z_{c_1} \leq x^1_{a} = \frac{\exp(w^1_a)}{\sum_s\exp(w^1_s)} \leq \frac{\exp(w^1_a)}{\exp(w^1_a) + \exp(w^1_b)} = \frac{\exp(w^1_a - w^1_b)}{1 + \exp(w^1_a - w^1_b)}
\]

So far, it is not clear how to proceed, because we don't know how to integrate this function, as $w^1_a(t)$ evolves according to the replicator equation. However, reusing the idea of this theorem, we can \emph{approximate} it by the flow of BRD, which we \emph{can} integrate here. First, recall that for any $t\in [0,\tau\gamma)$, $w^1_a(t) = w^1_a(0) + \int_0^t\rd^1_a(w)\dx t$. Because this is player 1, we know two additional facts: (a) $u^1(c_1)$ and $u^1(c_2)$ are equal, because they differ only in the strategy of player 1, and (b) the bound of \eqref{bound} holds in this interval. Using (a), we can write:
\[
w^1_a(t) = w^1_a(0) +  \int_{0}^{t} \rd^1_s(w)\dx t = w^1_a(0) + \int_0^t\rd^j_a(w) - \brd^j_a(f)\dx t + t u^1(c_1)
\]
where we used the fact that the gradient of BRD is a constant value $u^1(c_1)$. By the bound of \eqref{bound}, $\left| \int_{0}^{\tau\gamma} (\rd^j(w)_s - \brd^j(f)_s )\dx t\right| < K\gamma \exp(-\alpha \gamma + 2E)$ for some constant $K$. We call this quantity $L(E,\gamma)$, and recall that $L(E,\gamma)\to 0$ as $\gamma\to\infty$. Then we get a bound that holds for all $t$ in this interval:
\[
|w^1_a(t) - w^1_b(t) - (f^1_a(0) - f^1_b(0)- t(u^1(c_1) - u^1(c_2)))| \leq 2|E + L(E,\gamma)|
\]

This inequality gives us bound for our integral which we can anti-differentiate:

\begin{align*}
    \left | \int_{0}^{\tau\gamma} \frac{\exp(w^1_a - w^1_b)}{1 + \exp(w^1_a - w^1_b)} \dx t \right|
    &\leq \left | \int_{0}^{\tau\gamma} \frac{\exp(f^1_a(0) - f^1_b(0)+ t(u^1(c_1) - u^1(c_2)) + 2E + 2L(E,\gamma))}{1 + \exp(f^1_a(0) - f^1_b(0)+ t(u^1(c_1) - u^1(c_2)) - 2E - 2L(E,\gamma))} \dx t \right| \\
    &\leq \exp(4E + 4L(E,\gamma)) \left | \int_{0}^{\tau\gamma} \frac{\exp(X + Yt)}{1 + \exp(X + Yt)} \dx t \right|
\end{align*}
where $X + Yt = f^1_a(0) - f^1_b(0)+ t(u^1(c_1) - u^1(c_2)) - 2E - 2L(E,\gamma)$. 

Here we observe that, when the terminals are symmetric around the $x$-intercept, this integral can be integrated in a straightforward way:
\[
\int_{-b}^b \frac{\exp(at)}{1+ \exp(at)}\dx t = \left [(1/a)\log(\frac{1 + \exp(at)}{1 + \exp(-at)}) \right]_{-b}^b= b
\]
We will evaluate our integral by shifting the terminals to center around the $x$-intercept. If not for the error term $2E + 2L(E,\gamma)$, this would be $q\gamma$, because $f^1_a(0) - f^1_b(0) + q(u^1(c_1) - u^1(c_2)) = 0$. Recall that this is the definition of $q$---it is the point where the BRD switch occurs for player 1 on the path from $f$. With the error terms, the midpoint is at $k := q\gamma + (2E + 2L(E,\gamma))/Y$. We can therefore bound the integral by a pair of integrals, where the first has terminals $[0,2k]$ and the latter has terminals $[2k,\tau\gamma]$. The former is symmetric around its midpoint $k$, and evaluates to $k$. We arrive at a new bound of
\[
\left | \int_{0}^{\tau\gamma}z_{c_1} \dx t - q\gamma \right | \leq \left |\exp(4E + 4L(E,\gamma)) \left (k + \left| \int_{2k}^{\tau\gamma} \frac{\exp(X + Yt)}{1 + \exp(X + Yt)} \dx t \right|\right) - q\gamma \right |
\]
Because $X + Yt = 0$ when $t=k$, and the integrand is always decreasing in $t$, the integrand is bounded above on the region $[2k,\tau\gamma]$ by $\exp(-k)/(1+\exp(-k)) \leq \exp(-k) \leq \exp(-q\gamma)$. Hence the integral is bounded by $(\tau\gamma - 2k)\exp(-q\gamma)$, which vanishes as $\gamma \to\infty$. It remains only to show that $\exp(4E + 4L(E,\gamma))k  - q\gamma= \exp(4E + 4L(E,\gamma))(q\gamma + (2E + 2L(E,\gamma))/Y) - q\gamma$ vanishes as $\gamma \to \infty$. Near zero, $\exp$ vanishes at the same speed as its argument, which vanishes exponentially quickly compared to any polynomial in $\gamma$. Hence this term is bounded by a quantity whose limit goes to zero, and so it also converges to zero. The same argument, \emph{mutatis mutandis}, works for the $z_{c_2}$ term as well. This completes the proof.
\end{proof}

    With Proposition~\ref{replicator - FP similarity} in hand, we can tackle the remainder of the proof. As before, we define $K$ to be the closed convex cone in $H$ where the BRD sequence of play $w$ follows $c$ forever.

    ($\Rightarrow$) Suppose that $c$ is a stable cycle for BRD. Then it is stable for RD.

    By Theorem~\ref{thm: stability test FP}, if $c$ is stable for BRD then $M_c$ has a unique maximal eigenvalue $\lambda > 1$  with eigenvector $\hat w$ in $K$.
    Let $w_k := RD_H^k(w)$. Our first step is to show that $\|w_k\|\to\infty$ as $k\to\infty$, if $\|w\|$ is initially sufficiently large. Write $RD_H(w) = M_c w + E(w)$ for some error term $E$, by Proposition~\ref{replicator - FP similarity}, which vanishes as $\|w\| $ grows. Let $\hat z$ be the \emph{dual eigenvector} of $\hat w$, that is, $\hat z^T M_c = \lambda \hat z^T$. $\hat z^T$ lies in the interior of the dual cone of $K$, so for each $w\in K$ we have $\hat z^Tw > 0$. Finally, note that because $\hat z^T w/\|w\|$ is bounded, the iterates $\hat z^T w_k$ grows to infinity if and only if $\|w_k\|\to\infty$. Then
    \[
    \hat z^T RD_H(w) = \hat z^T( M_c w + E(w)) = \lambda \hat z^Tw + \hat z^T E(w)
    \]
    Because $E(w)$ vanishes for large $w$, if $\|w\|$ is sufficiently large we get $\hat z^T E(w) < \nu z^T w$ and so $
    \hat z^T(RD_H(w)) > (\lambda - \nu) z^T w $ for $\nu < \lambda - 1$, show $\|w\|$ grows to infinity. Hence after this point, as $\|w\|$ grows, by Proposition~\ref{replicator - FP similarity} the orbits of BRD and RD get arbitrarily close together
    The remainder of the proof follows from straightforward arguments by the fact that the iterates of BRD converge to $\hat w$ in $K$. This tells us that $c$ is stable under RD, with the same limiting eigenvector $\hat w$.

    ($\Leftarrow$: If $c$ is stable under RD, it is stable under BRD.)

    Suppose that $c$ is stable under RD, so there is an open set $U$ of starting points such that the sequence of play under RD follows $c$, and these share the same limit in the normalized form $w/\|w\|$. Suppose first that the orbits of all points in $U$ are bounded. Then all points in $U$ have limit points in $H$. By stability, the limit points are locally isolated, and so they cannot have be a full-measure set. This contradicts the fact that the replicator preserves volume in payoff space, and so cannot converge to a lower-dimensional set like a periodic orbit. 
    We conclude that the sequence $RD_H^k(w)$ of iterates is unbounded.

    Let $u_k = RD_H^k(w) / \|RD_H^k(w)\|$ be the normalized form of the sequence, which lies on the unit sphere $S^{n-1} \cap H$. Because $c$ is persistent for RD, this sequence lies entirely in this set. Let $z$ be a limit point of this sequence, which exists by compactness. Now let $\Phi$ be the projected form of the BRD map, so $\Phi(u) = \frac{M_c u}{\|M_cu\|}$. As $\|x\|\to\infty$, $\Phi(w_k)\approx RD(w_k)/\|RD(w_k)\|$ by Proposition~\ref{replicator - FP similarity}, and so the limit points of both maps are the same. However, any limit points for $\Phi$ must lie in the closed cone $K\subseteq H$, because their BRD iterates remain in $H$ forever.
    As the open set $U$ approaches this limit under RD, there must be an open set which lies within $K$, so this cycle is persistent for BRD. Because the limit points are the same, and the limit points are isolated for RD, they are isolated for BRD as well, giving us that $c$ is stable for BRD.
\end{proof}







\section{Cyclic sink equilibria are attractors} \label{sec: cyclic sink equilibria}

In the previous sections, we developed an eigenvector-based computational test for when a walk in the preference graph defines a persistent walk under BRD, and showed how stable periodic plays of BRD define stable periodic plays for the replicator dynamic. In this section we give a structural (that is, graph-theoretic) criterion for a walk $c$ which, when satisfied, implies that $c$ is stable under BRD and the replicator. This condition is inspired by the key property of Shapley's \cite{shapley_topics_1964} and Jordan's games \cite{jordan_three_1993}, when viewed as graphs (Figure~\ref{fig:shapley and jordan}). There, the walks in question have no outgoing arcs (that is, are in both cases a sink connected component of the preference graph). We show this is generally true: whenever a walk has no outgoing arcs, then it is stable under BRD, and hence stable under RD. More strongly, we will show it that the cycle defines an attractor of the replicator dynamic (it is asymptotically stable). 

Before we prove this theorem, several remarks are warranted. First, if a walk has the property that the outgoing arc from a node is always unique, then it must be exactly a cycle (it can have no repeated nodes). From now on, we will describe this as \emph{stable cycle}, rather than simply a stable walk. Additionally, this cycle must consist of best-response (rather than only better-response) deviations, so this cycle is simultaneously a sink equilibrium of the \emph{best-response graph}.

For BRD, the fact that a cycle with no outgoing arcs is persistent is intuitive---BRD switches occur along arcs in the preference graph, and so the cycle is followed repeatedly from an open set. The challenge of this proof involves showing stability, which implies uniqueness and dominance of the eigenvector. We show this through an application of the Generalized Perron-Frobenius theorem, by showing that the orbits of any point in the interior are unbounded.



\begin{thm} \label{thm: sink equilibria are stable}
    Suppose a cycle $c$ is a sink equilibrium (it has no outgoing arcs) in an $N$-player game. Then the union of the strategy spaces of the 1-dimensional subgames defined by the arcs collectively form an attractor of the replicator dynamic.
\end{thm}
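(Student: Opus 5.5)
The plan has two halves. First, I will show that $c$ is stable under BRD by checking the converse direction of Theorem~\ref{thm: stability test FP}, and then transfer this to RD via Theorem~\ref{thm: RD stability = FP stability}. Second, I will upgrade the resulting open-set stability to asymptotic stability in strategy space. That second step needs uniform control near the boundary faces, which Definition~\ref{def: periodic play} does not give.

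\emph{Step 1: the cone $K$.} Let $H=\arg\max^{-1}c_1\cap\arg\max^{-1}c_K$ and let $K$ be the cone of points of $H$ whose BRD play follows $c$ forever. Suppose $\arg\max w=c_j$. By the switching lemma, the next switch must go along an arc out of $c_j$, and since $c$ is a sink equilibrium the only such arc is $c_j\to c_{j+1}$. That arc is a best-response arc, so the switching player's competing strategy is unique. This gives two facts:
\begin{itemize}
\item[(i)] Every point of the relevant open cone in $H$ follows $c$ forever. Hence $K$ is all of this cone, it is proper, and $M_cK\subseteq K$.
\item[(ii)] Every off-cycle strategy of every player keeps a payoff gap that grows linearly along the trajectory.
\end{itemize}

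\emph{Step 2: spectral conditions.} I want a simple dominant eigenvalue whose eigenvector lies in $\intr K$, with all times $\hat T_i>0$. I will try to show that $M_c$ (or a power of it) is strongly $K$-positive, meaning it maps $K\setminus\{0\}$ into $\intr K$. The idea is that after one lap every off-cycle coordinate is pushed strictly below the maximum by fact (ii). I would also check that every time $c_{a_j}^T M_{a_j}\cdots M_{a_1}w$ is strictly positive for $w\neq0$ on the boundary of $K$: a zero time should force $w$ to sit at a degenerate tie, which is then resolved strictly by the next arc. Given strong positivity, the strong form of the Generalized Perron--Frobenius theorem (Krein--Rutman) yields a simple, dominant, real $\lambda$ with $\hat w\in\intr K$. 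Properties (1) and (2) of Theorem~\ref{thm: stability test FP} then hold strictly, and $\lambda>1$ follows from volume conservation exactly as in that proof. Theorem~\ref{thm: RD stability = FP stability} now gives stability under RD, with $\|RD_H^k(w)\|\to\infty$ along the direction $\hat w$.

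\emph{Step 3: asymptotic stability of the union $A$ of the edge subgames.} Let $A$ be this union, viewed in strategy space. I will first use fact (ii) in the replicator. Let $s$ be a strategy of player $i$ that is not a cycle strategy, or more generally one not best-replied on the relevant edge. Then $u_i(s;x_{-i})$ is strictly below the payoff of the cycle strategy at both endpoints of each edge, hence on a neighborhood of $A$. So $\frac{d}{dt}\log(x^i_s/x^i_r)\le-\epsilon$ there, and this mass decays exponentially.

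That leaves the residual dynamics inside the subgame spanned by the cycle strategies. There I will use the payoff-space picture of Step 2. A point near an edge, in the interior of that subgame, corresponds to large payoff gaps, which means large $\|w\|$ in $H$. By Proposition~\ref{replicator - FP similarity} its Poincaré iterates stay close to $M_c^k w$, so the gaps grow, and in strategy space the trajectory converges to $A$. I will combine these two observations into a neighborhood $U$ of $A$ such that every orbit from $U$ stays near $A$ and converges to it, giving Lyapunov stability plus attraction.

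\emph{Main obstacle.} The hard part is uniformity on the boundary, where some cycle strategies have zero mass so payoff space is unavailable. Faces are invariant under RD, so I would handle these points by induction on faces. On a face that omits some cycle strategies, the sink property forces the restricted flow to run monotonically along the edges toward a node $c_j$ whose outgoing cycle strategy is absent, and that node is attracting within the face. Stitching these face-wise estimates together with the interior estimate into a single uniform neighborhood is the step I expect to be most delicate. If that fails, my fallback is to build an explicit Lyapunov-type function, namely a weighted sum of $\log$ payoff gaps using the dual eigenvector $\hat z$ from the proof of Theorem~\ref{thm: RD stability = FP stability}.
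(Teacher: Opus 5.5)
Your architecture matches the paper's: generalized Perron--Frobenius on the invariant cone gives BRD stability, Theorem~\ref{thm: RD stability = FP stability} transfers it to RD, and faces are handled by induction ending at a strict PNE of the face. But Step~2 rests on a false claim. $M_c$ is not strongly $K$-positive, and no power of it is, as soon as some player has a strategy not used on $c$ (e.g.\ a non-participating player with two strategies, or Shapley's game with a dominated fourth strategy added).

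To see this, take $w$ with coordinate $0$ on every strategy used on $c$ and $-1$ on every other strategy. Then $w\in H$, every switching time of the lap is zero, and $M_c w=w$. Modulo per-player constants $w\neq 0$, and $w\in\partial K$ because a non-switching participating player has several tied maximizers. So $M_c$ fixes pointwise a nontrivial face of $K$, it is $K$-reducible, and the strong Krein--Rutman theorem does not apply. Your remark that a zero time is ``resolved strictly by the next arc'' fails exactly here: all times vanish at once and nothing moves.

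The paper records precisely this dichotomy in its first lemma ($M_c w\in\intr H$ or $M_c w=w$). It then uses its growth lemma (interior orbits are unbounded) to get $\rho(M_c)>1$, so these eigenvalue-$1$ boundary points cannot carry the Perron vector. Within your framework the repair has three parts. Work on the section modulo per-player constants; this is needed anyway, since $K$ contains the lines spanned by per-player constant vectors and so is not pointed. Observe that $M_c$ is block lower-triangular with the identity on off-cycle coordinates, because each $c_a$ involves only cycle strategies. Then apply Krein--Rutman to the cycle block, where the dichotomy does give strong positivity. You then need $\lambda>1$ \emph{strictly} to dominate the identity block.

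That is where the second gap lies: nothing in your plan uses $K>4$, and for $K=4$ your intended conclusion is false. By Section~\ref{sec: 4cycle}, $M_c$ is then the identity on the cycle block, so $\lambda=1$ and $c$ is only persistent. Theorem~\ref{thm: RD stability = FP stability} cannot be invoked and the gap growth in your Step~3 does not occur. In fact the four edges do not attract, since the replicator has closed orbits filling the interior of the $2\times 2$ face. The paper treats this case separately, citing that a $2\times 2$ sink subgame is an attractor, so there the whole face is the attractor.

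For $K>4$, ``volume conservation as in that proof'' can be made to work, but you must supply the dimension count that the earlier proof leaves implicit.
\begin{itemize}
\item The return map on the section has $|\det|=1$: the per-arc area factors $\|c_{a_i}\|/\|c_{a_{i-1}}\|$ telescope, because $c_{a_{i-1}}^T u(c_i)=c_{a_i}^T u(c_i)=-1$.
\item The cycle block of the section has dimension $\sum_j(n_j-1)-1$, with $n_j$ the number of cycle strategies of player $j$. This is at least $2$ exactly when $K>4$.
\end{itemize}
A strictly dominant $\lambda\le 1$ would then force $|\det|<1$, a contradiction. This plays the role of the strict inequality in the paper's growth lemma. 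Your Step~3 otherwise follows the paper's final paragraph in more detail, and the boundary uniformity you flag is also only sketched there.
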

\begin{proof}
    We begin by handling a special case, which occurs when the cycle has length exactly four. Recalling Section~\ref{sec: 4cycle}, any such cycle can only involve two players, and must span a $2\times 2$ subgame of the game. Because $c$ is a sink equilibrium, we know that this subgame is a sink equilibrium, and this implies it is an attractor \citep{biggar_replicator_2023}\footnote{Note that in this special case, $c$ is persistent under RD but not stable, because there is are always infinitely many concentric periodic orbits in these games \cite{mertikopoulos_cycles_2018}.}.
    
    Now, suppose the cycle has length $K > 4$ (in a preference graph, every cycle has length at least four). We will show that $c$ is stable under BRD, and hence stable under RD by Theorem~\ref{thm: RD stability = FP stability}. Our method will involve the strong form of the Generalized Perron-Frobenius theorem. As in the previous two theorems, we let $H$ be the set consisting of the intersection $\arg\max^{-1} c_1 \cap \arg\max ^{-1}(c_K)$. The set $H$ is a proper convex cone, that is non-empty. Further, persistence of $c$ is immediate---because every BRD path follows a path in the preference graph, any point which begins on $c$ remains on $c$. This means that $H$ is $M_c$-invariant, and so there is an eigenvalue $\lambda = \rho(M_c)$, whose eigenvector $\hat w$ lies in $H$. To prove stability, we need to establish that $\lambda$ is dominant, and $\hat w$ lies in $\intr H$ (Theorem~\ref{thm: stability test FP}). Using the Perron-Frobenius theorem, this will follow from two facts, which we prove in lemmas below: (1) we show the orbits of points in $\intr H$ grow unboundedly, and (2) from any point $w$ on the boundary of $H$, the orbit of $H$ either enters $\intr H$ or is bounded. This first claim establishes that $\rho(M_c) > 1$, while the second establishes that on the boundary $\rho(M_c|_{\partial H}) \leq 1$. Together, these imply that $\hat w$ lies in $\intr H$, and $\lambda$ is dominant.

    Let $c_1,c_2,\dots,c_K$ (recall $K>4$) be the profiles on the cycle, and let $\alpha_1,\dots\alpha_K$ be the weights on the arcs of the cycle, numbered such that $\alpha_i$ is the arc leaving the $i$th profile on $c$. Because $c$ is a sink equilibrium, every arc into $c$ has nonzero weight, so we let $\mu >0$ denote the minimum weight on any arc into $c$. Let $w$ be any starting point on $H$. 
    Define $T_i$ as the time spent playing $c_i$ on the path from $w$, so $T_i = c_{a_i}^T M_{a_i} M_{a_{i-1}} \dots M_1 w \geq 0$.

    \begin{lem}
        If $w \in H$, then either $M_c w \in \intr H$ or $M_c w = w$.
    \end{lem}
    \begin{proof}
        At each profile $c_i$, any strategy $s$ that is not a best response accrues payoff at a rate that is at least $\mu$ worse than the best response. If $s$ is the strategy we switch to on the $i$th interval, we have $\alpha_i T_i > \mu T_{i-1}$. In particular, if $T_{i-1} > 0$, then $T_{i} > 0$. Next, note that if $T_K > 0$, then $M_c w \in \intr H$, because every strategy other than those in $c_1$ and $c_K$ must have strictly worse payoff, by at least $\mu T_K$. We conclude that if any time $T_i > 0$, then $M_c w\in \intr H$. However, if every $T_i$ is zero, then $M_c$ acts as the identity on $w$---recall that by Definition~\ref{def: arc matrix}, $w^{(i)} = M_{a_i} w^{(i-1)} = w^{(i-1)} + u(c_i) T_i$.
    \end{proof}

    \begin{lem}
        If $w\in \intr H$, then the orbit of $w$ goes to infinity.
    \end{lem}
    \begin{proof}
        Let $w^k$ denote the $k$-th iterate $M_c^k w$, and let $T_i^{k}$ denote the spent playing the $i$th profile $c_i$ from $w^k$, so $T_i^k = c_{a_i}^T M_{a_i} M_{a_{i-1}} \dots M_1 w^k$. Since $w\in \intr H$, each $T_i^k$ is positive. To prove the result, it is sufficient to show that each $T_i^k$ is bounded below, and $T_i^{k}\to\infty$ as $k\to\infty$.

        Given a profile $c_i$, let $s$ be the unique strategy that is a best-response to $c_i$ that is not played in $c_i$, so $c_{i+1} = (s;c_i^{-1})$ where we have assumed without loss of generality that $s$ is a strategy of player 1. Because $c$ is a sink equilibrium, no player switches twice in a row, so player 1 does not switch strategy between times $T_{i-1}$ and $T_i$. Define $\arc{c_{j-1}}{c_j}$ as the previous pair of profiles where the switching player was player 1 (hence $j \neq i-1$). If $w_{(i)}$ is the payoff vector at the beginning of the $i$th interval, we define $\Delta_{i,s}$ to be difference $\max w^1_{(i)} - w^i_s$ between the maximum payoff at that time and the payoff for strategy $s$. Because $s$ must be maximal at the end of the $i$th interval (at this point we switch to $c_{i+1}$), we have that $\alpha_i T_i = \Delta_{i,s}$. While playing $c_{j}$, the player 1 is not playing $s$, so $\Delta_{j+1,s} \geq \alpha_{j-1} T_{j}$. We conclude that for each $i$, there exists a $j \neq i-1$ such that $\alpha_i T_i \geq \alpha_{j}T_{j+1}$.
        
        Note that because $\Delta_{k,s}$ is increasing at any profile $c_k$ where player 1 does not play $s$, equality is only achieved here if both $T_j = T_{i-1}$ and $s$ is played in $c_{j-1}$. Otherwise, we have the stricter inequality $\alpha_i T_i \geq \alpha_{j}T_{j+1} + \mu T_k$ for some $k$. If the cycle is longer than four, the equality case can never happen twice in a row, so if $\alpha_i T_i = \alpha_{j-1} T_{j}$, then $\alpha_{j} T_{j} > \alpha_{k-1} T_k$ for some $k$. 
    
    Applying this inequality repeatedly gives us a sequence of the form
    \begin{equation*} \label{big inequality}
        T_i^{k+1} \geq \frac{\alpha_{j_1 - 1}}{\alpha_i} T^k_{j_1} \geq \frac{\alpha_{j_1 - 1}}{\alpha_i}\frac{\alpha_{j_2 - 1}}{\alpha_{j_1}} T^k_{j_2} \geq \dots \geq \frac{\alpha_{j_1 - 1}}{\alpha_i}\frac{\alpha_{j_2 - 1}}{\alpha_{j_1}} \dots T^k_i + \delta T_{min}
    \end{equation*}
    where $\delta$ is a constant and $T_{min} = \min_i T^k_i$. To show that $T_i$ grows, we want to show its coefficient is at least one. Consider the permutation defined by mapping $i \to j$ by the criteria of $\alpha_i T_i \geq \alpha_{j}T_{j+1}$. This permutation splits into a collection of cycles, each giving us an inequalities in the form of \eqref{big inequality}, as in
    \[
    T_{i_1}^{k+1} > K_1 T^k_{i_1},\quad T_{i_2}^{k+1} > K_2 T^k_{i_2}, \quad \dots \quad T_{i_k}^{k+1} > K_k T^k_{i_k}
    \]
    However, each $\alpha$ term appears exactly twice in these collective inequalities: once in the numerator of \emph{some} $K_a$, and once in the denominator of some $K_b$. This implies that $K_1\times K_2 \times \dots = 1$, and so \emph{at least one $K_i \geq 1$}. This gives us a $T_i$ where $T^{k+1}_i > (1+\epsilon) T_i^k$, and so $T_i$ grows exponentially from $w$. Because $T_i > \mu T_{i-1}/\alpha_i$ for any $i$, each $T_i$ must be bounded below and eventually grow to infinity.
    \end{proof}

    This completes the proof that $c$ is stable under BRD, and hence is stable under RD by Theorem~\ref{thm: RD stability = FP stability}. Lastly, we argue that this implies that this set is an attractor for RD. By Theorem~5.2 of \cite{biggar_replicator_2023}, any attractor of this game contains the `content' of the sink equilibrium (which in this case is exactly the union of the 1-dimensional strategy spaces corresponding to arcs). To show this is itself an attractor, we need only show it is asymptotically stable, which requires that it is stable both on the interior of the space (which we get from Theorem~\ref{thm: RD stability = FP stability} and the stability of $c$ for BRD) and also on the boundary. On the boundary, where the cycle intersects some subgame, it is either still a cycle that is a sink equilibrium (where it is stable by an inductive argument) or it is a best-response path which ends in a PNE within that subgame. 
\end{proof}

\section{Conclusions and open problems} \label{sec: conclusions}

In this paper we studied the phenomenon of stable periodic behavior in game dynamics, and examined how we can identify such behavior computationally. We showed that a sequence of profiles appears as a stable sequence of play from an open set of initial conditions under BRD and RD if and only if it satisfies the eigenvector test derived from the Poincar\'e matrix. Further we showed in Theorem~\ref{thm: sink equilibria are stable} that this test is always satisfied when the cycle in question is a sink equilibrium of the game, connecting this stability analysis with the recent literature on sink equilibria and the structure of the preference graph \cite{goemans_sink_2005,biggar_replicator_2023,biggar_attractor_2024,biggar2025sink}. Several fascinating questions remain open, along the following themes:
\begin{itemize}
    \item (Computing stable cycles:) Our approach gave a computational determination of when a \emph{given} cycle of play is stable, but we did not establish the difficulty of \emph{finding} a stable cycle, or determining if any exist. In other words, we showed that finding stable cycles lies in NP, but a fundamental follow-up question is whether it lies in P, or alternatively it is NP-complete. Even if the general case is intractable, there may be special classes (such as the sink equilibrium case we considered in Section~\ref{sec: cyclic sink equilibria}) which can be efficiently computed, and these would be valuable to identify.

    \item (Structural properties of stable cycles:) Although our Poincar\'e matrix test can determine the stability of a cycle, the cycles that pass that test have many special properties. Determining these properties is important both conceptually---to understand what kinds of cycles can be stable---and computationally, in order to more efficiently search the space of possible cycles. As an example, $2\times n$ games have cycles of many lengths, and yet \emph{the only stable ones are 4-cycles}. We conjecture the following: \emph{in a two-player game, every stable cycle alternates players}. This seems plausible, yet we do not know of the tools to prove it.

    \item (Characterizing more general limit sets:) This paper focused on the special case of cycles, but the broader goal is to characterize---and compute---all attractors of game dynamics \cite{papadimitriou_game_2019,biggar_attractor_2024,biggar2025sink}. Our hope is that our techniques developed here might extend to non-periodic behavior, and so provide a new angle with which to tackle the general problem.
\end{itemize}

\bibliographystyle{plain}
\bibliography{refs}

\end{document}